\def\btimes{\,\rotatebox[]{-90}{$\ltimes$}\,}
\newtheorem{theorem}{Theorem}[section]
\newtheorem{corollary}[theorem]{Corollary}
\newtheorem{lemma}[theorem]{Lemma}
\newtheorem{definition}[theorem]{Definition}
\theoremstyle{definition}
\newtheorem{remark}[theorem]{Remark}
\def\BibTeX{{\rm B\kern-.05em{\sc i\kern-.025em b}\kern-.08em
    T\kern-.1667em\lower.7ex\hbox{E}\kern-.125emX}}
\begin{document}
\title{A Dimension-Keeping Semi-Tensor Product Framework for Compressed Sensing}
\author{Qi Qi, Abdelhamid Tayebi, \IEEEmembership{Fellow, IEEE}, Daizhan Cheng, \IEEEmembership{Fellow, IEEE}, and Jun-e Feng
\thanks{This work was supported by the National Natural Science Foundation of China (62273201, 62350037), and the Taishan Scholar Project of Shandong Province of China (TSTP20221103).(Corresponding author: Jun-e Feng.)

Qi Qi and Jun-e Feng are with the School of Mathematics, Shandong University, Jinan, Shandong, China 250100 (e-mail: 202411968@mail.sdu.edu.cn; fengjune@sdu.edu.cn).

Abdelhamid Tayebi is with the Department of Electrical Engineering, Lakehead University, Thunder Bay, Ontario, Canada (e-mail: atayebi@lakeheadu.ca).

Daizhan Cheng is with the Key Laboratory of Systems and Control, Academy of Mathematics and Systems Science, Chinese Academy of Sciences, Beijing, China 100190 (e-mail: dcheng@iss.ac.cn).} 
}

\markboth{IEEE TRANSACTIONS ON SIGNAL PROCESSING}%
{How to Use the IEEEtran \LaTeX \ Templates}

\maketitle

\begin{abstract}
	In compressed sensing (CS), sparse signals can be reconstructed from significantly fewer samples than required by the Nyquist-Shannon sampling theorem. While non-sparse signals can be sparsely represented in appropriate transformation domains, conventional CS frameworks rely on the incoherence of the measurement matrix columns to guarantee reconstruction performance. This paper proposes a novel method termed Dimension-Keeping Semi-Tensor Product Compressed Sensing (DK-STP-CS), which leverages intra-group correlations while maintaining inter-group incoherence to enhance the measurement matrix design. Specifically, the DK-STP algorithm is integrated into the design of the sensing matrix, enabling dimensionality reduction while preserving signal recovery capability. For image compression and reconstruction tasks, the proposed method achieves notable noise suppression and improves visual fidelity. Experimental results demonstrate that DK-STP-CS significantly outperforms traditional CS and STP-CS approaches, as evidenced by higher Peak Signal-to-Noise Ratio (PSNR) values between the reconstructed and original images. The robustness of DK-STP-CS is further validated under noisy conditions and varying sampling rates, highlighting its potential for practical applications in resource-constrained environments.
\end{abstract}

\begin{IEEEkeywords}
 compressed sensing, semi-tensor product, dimension-keeping semi-tensor product, image reconstruction, noise reduction.
\end{IEEEkeywords}

\section{Introduction}
Compressed sensing (CS), a revolutionary paradigm in signal processing, has emerged as a powerful methodology for reconstructing sparse signals from sub-Nyquist measurements \cite{WOS:000236714000001,WOS:000247642400015}. The fundamental premise of CS lies in exploiting signal sparsity, where a vector $\mathbf{x}$ is considered $k$-sparse ($\mathbf{x} \in \Sigma_k$) if it contains at most $k$ non-zero elements.

Traditional signal acquisition frameworks, governed by the Nyquist sampling theorem \cite{WOS:000174400500008}, require uniform sampling at twice the highest signal frequency. However, the exponential growth of data in modern digital systems renders Nyquist-rate sampling increasingly impractical due to excessive storage and computational demands. This limitation has motivated the development of CS, which fundamentally differs from conventional approaches through two key innovations: (1) non-uniform sampling strategies that maintain measurement incoherence, and (2) sparse signal recovery from significantly fewer measurements than required by Nyquist criteria. While CS initially assumes signal sparsity, practical implementations extend this framework to non-sparse signals through sparse representation in appropriate transformation domains \cite{WOS:A1993MU18200015}. This generalization enables efficient compression of diverse real-world signals while preserving essential information.

The standard framework of CS is
a special case of underdetermined linear equations
\begin{equation}
	\mathbf{y} = \mathbf{A x},
\end{equation}
where $ \mathbf{A} \in \mathcal{M}_{m\times n}, \mathbf{x} \in \mathbb{R}^{n},\mathbf{y} \in \mathbb{R}^{m}, m\ll n.$
Vector $\mathbf{x}$ is the input signal and vector $\mathbf{y}$ is the compressed signal. 
Vector $\mathbf{x}$ cannot be obtained directly from the above equation, but if $\mathbf{x}$ is sparse in some transformation domain, one can obtain an approximate solution for $\mathbf{x}$. In fact, vector $\mathbf{x}$ can be expressed as
\begin{equation}
	\mathbf{x} = \Theta \mathbf{s},
\end{equation}
where $\Theta$ is a sparsifying dictionary and $\mathbf{s}$ is a sparse vector. Thus,  
\begin{equation}
	\mathbf{y} = \mathbf{A}\Theta \mathbf{s} = \mathbf{\Psi s}.
\end{equation}
In the conventional compressive sensing (CS) paradigm,  measurement matrices are inherently non-adaptive, implying that the rows of the sensing matrix $\mathbf{\Psi}$ are predefined and remain invariant throughout the acquisition process, independent of any prior measurement outcomes. However, adaptive sensing strategies, where subsequent measurement matrices are dynamically adjusted based on previously observed data, have been shown in specific scenarios to achieve substantial improvements in reconstruction accuracy or sampling efficiency. 
Theoretical and practical studies highlight that such adaptability can yield enhanced performance bounds, particularly in resource-constrained or noise-dominated environments.
A solution to the underdetermined linear equation (3) can be found through the following optimization:
\begin{equation}\label{eq2}
	\min_{\mathbf{s}} \left \| \mathbf{s}  \right \|_{0}             ~~\text{subject to}~~ \mathbf{y} = \mathbf{\Psi s}, 
\end{equation}
where $\|\mathbf{s}\|_{0}$ denotes the $\mathcal{L}_0$ norm of the vector $\mathbf{s}$.
In fact, one can show that for a general matrix $\mathbf{\Psi}$, even finding a solution that approximates the true minimum is NP-hard \cite{WOS:000233621500010}. 
Since equation (4) is difficult to solve, problem (\ref{eq2}) can be transformed, for an approximate solution, into the following problem \cite{WOS:000278188800007,WOS:000263705000002}:
\begin{equation}\label{eq3}
	\min_{\mathbf{s}} \left \| \mathbf{s}  \right \|_{1}             ~~\text{subject to}~~ \mathbf{y} = \mathbf{\Psi s}, 
\end{equation} where $ \| \mathbf{s}\|_{1}$ denotes the $\mathcal{L}_1$ norm of the vector $\mathbf{s}$.
The design of the sensing matrix $\mathbf{\Psi}$ satisfying the restricted isometry property (RIP) \cite{WOS:000233621500010,WOS:000256409100023} is an important problem. The measurement matrix $\mathbf{A}$ satisfies the RIP of order $k$ if there exists a $\delta_{k}^{\mathbf{A}} \in (0,1)$ such that
\begin{equation}
	(1-\delta_{k}^{\mathbf{A}})\left \| \mathbf{x} \right \| _{2}^{2} \le \left \| \mathbf{Ax} \right \| _{2}^{2} \le (1+\delta_{k}^{\mathbf{A}})\left \| \mathbf{x} \right \| _{2}^{2}.
\end{equation}
The use of random matrices to construct the measurement matrix $\mathbf{A}$ offers several advantages, particularly in the context of the RIP. First, random measurement systems exhibit some robust characteristics, allowing the feasibility of the signal recovery even when a subset of measurements is corrupted or lost, provided the retained subset is sufficiently large \cite{WOS:000295420200006}. This property ensures inherent robustness to measurement errors in compressive sensing frameworks. Second, in practical scenarios where a signal $x$ is sparse under a specific transform basis $\Theta$, the critical requirement shifts to ensuring that the matrix product $\mathbf{A} \Theta$ satisfies the RIP. Unlike deterministic constructions, which necessitate explicit structural alignment between $\mathbf{A}$ and $\Theta$, random matrix designs are more general. For instance, if $\mathbf{A}$ is drawn from a Gaussian distribution and $\Theta$ is an orthonormal basis, the product $\mathbf{A}\Theta$ retains Gaussian properties, thereby satisfying the RIP with high probability. Notably, this universality extends beyond Gaussian distributions—rigorous theoretical analyses confirm analogous results for sub-Gaussian distributions and broader classes of random matrices. This universality principle underscores a key advantage of random constructions, enabling robust performance across diverse sparse representation bases without requiring prior knowledge of $\Theta$.

Recent advancements in compressed sensing have found transformative applications across diverse domains, including medical imaging systems \cite{WOS:000251346800013}, computational photography architectures \cite{WOS:000328691500013}, and distributed sensor networks \cite{WOS:000254471100013}. Deep learning methods have also been applied to some improvements of compressed sensing\cite{13}. Parallel advancements have been made in measurement matrix optimization, with notable progress in structured designs such as Toeplitz matrices \cite{WOS:000253416000061} and chaotic sensing operators \cite{WOS:000278996900005}. Notably, the semi-tensor product (STP) framework \cite{WOS:000384706700008} has emerged as a particularly effective strategy for measurement matrix enhancement, achieving dual objectives of dimensionality reduction and storage efficiency while maintaining reconstruction fidelity.
Building upon the conventional CS framework, this work proposes a novel dimension-keeping semi-tensor product (DK-STP) operation to optimize measurement matrix design. The DK-STP mechanism achieves dual objectives: (1) substantial storage reduction through dimensionality preservation, and (2) enhanced image reconstruction quality via noise suppression in compressed sensing systems. To quantitatively assess reconstruction fidelity, we employ the Peak Signal-to-Noise Ratio (PSNR) metric, where higher values indicate greater similarity between reconstructed and original images, reflecting reduced distortion. Experimental results demonstrate that DK-STP-CS achieves statistically significant PSNR improvements over both conventional CS and STP-CS baselines, validating its superior performance.

The remainder of this paper is organized as follows: Section \uppercase\expandafter{\romannumeral2} shows the specific process of compressive sensing. Section \uppercase\expandafter{\romannumeral3} presents fundamental concepts of DK-STP. Section \uppercase\expandafter{\romannumeral4} establishes the theoretical framework of DK-STP-CS, including formal proofs of its reconstruction guarantees. Section \uppercase\expandafter{\romannumeral5} conducts comprehensive simulations comparing reconstruction performance across diverse image datasets, sampling rates, and parameter configurations. Finally, Section \uppercase\expandafter{\romannumeral6} concludes with insights into future research directions.

\section{Compressed sensing problem formulation}
In some applications, a time-varying vector $\mathbf{x} \in \mathbb{R}^{n}$ needs to be transmitted (at some sampling rate) to a destination. A direct transmission would incur enormous resource consumption when the the size of $\mathbf{x}$ is large.  Therefore, it is convenient to design a constant matrix $\mathbf{A} \in \mathcal{M}_{m\times n}$ $(m\ll n)$ to generate a lower dimensional vector $\mathbf{y} \in \mathbb{R}^{m}$, with $\mathbf{y} = \mathbf{A}\mathbf{x}$, to be transmitted instead of $\mathbf{x}$. Then, at the destination, the vector $\mathbf{x}$ should be recovered from the transmitted signal $\mathbf{y}$ and the matrix $\mathbf{A}$ which is transmitted once. We refer to $\mathbf{x}$ as the input signal and $\mathbf{y}$ as the output signal. The framework consists of three key components:
\begin{itemize}
	\item \textbf{Compression}: Given a input signal $\mathbf{x} \in \mathbb{R}^{n}$, find a measurement matrix $\mathbf{A} \in \mathcal{M}_{m\times n}$ $(m\ll n)$, leading to the compressed vector  $\mathbf{y} \in \mathbb{R}^{m}$, such that 
	\[
	\mathbf{y} = \mathbf{A}\mathbf{x}.
	\]
	For non-sparse $\mathbf{x}$ with sparsity basis $\Theta$, one has $\mathbf{x}=\mathbf{\Theta}\mathbf{s}$, where $\mathbf{s}$ is a sparse signal where most components of the vector are zero. The sensing matrix is given by $\mathbf{\Psi} = \mathbf{A}\Theta$. For sparse signals $\mathbf{x}$, the sensing matrix is given by $\mathbf{\Psi} = \mathbf{A}$. Therefore,
	\[
	\mathbf{y} = \mathbf{A\Theta}\mathbf{s} = \mathbf{\Psi s}.
	\]
	
	\item \textbf{Transmission}: Transmission of the compressed signal $\mathbf{y}$ and the measurement matrix $\mathbf{A}$. 
	\item \textbf{Reconstruction}: At the arrival site, we recover the signal $\mathbf{\hat{x}}$ (an approximate of the signal $\mathbf{x}$) through optimization algorithms. We solve an approximate problem formulated as follows:
	$$\min_{\mathbf{s}} \left \| \mathbf{s}  \right \|_{1}             ~~\text{subject to}~~ \mathbf{y} = \mathbf{\Psi s},$$
	and obtain the reconstructed signal $\mathbf{\hat{x}} = \Theta \mathbf{s}$.
	
\end{itemize}
In practical applications, conventional measurement matrices often require substantial memory storage, which increases transmission costs. To address this problem, we propose a special semi-tensor product called the dimension-keeping semi-tensor product to design the measurement matrix $\mathbf{A}$. In this way, we only need to transmit a smaller measurement matrix to achieve the goal of reconstruction. This significantly reduces the bandwidth cost occupied during transmission and improves the reconstruction quality to a certain extent.
\section{Dimension-keeping semi-tensor product}
First of all, we introduce the definition of STP, which is a matrix multiplication across dimensions. When the dimension of the matrix does not meet the traditional matching conditions, it can also be calculated.
\begin{definition}\label{2.1} \cite{WOS:000255240400014}
	Let~$\mathbf{A}\in \mathcal{M}_{m\times n}$ and $\mathbf{B}\in \mathcal{M}_{p\times q}$, and $t = lcm(n,p)$ be the least common multiple of $n$ and $p$. The STP of $\mathbf{A}$ and $\mathbf{B}$, denoted by $\mathbf{A}\ltimes \mathbf{B}$, is defined as
	\begin{equation}
		\mathbf{A}\ltimes \mathbf{B} :=(\mathbf{A}\otimes \mathbf{I_{t/n}})(\mathbf{B}\otimes \mathbf{I_{t/p}}),
	\end{equation}
	where $\otimes$ is the Kronecker product, $I_{n}$ represents the $n$-order identity matrix.
\end{definition}
DK-STP is another form of STP, which is expanded from the identity matrix to a row vector with elements of $1$. Let $\mathbf{1}_n \in \mathbb{R}^n$ denote an $n$-dimensional column vector with all entries equal to one.
\begin{definition}\label{2.2}\cite{WOS:001286824800001}
	The DK-STP of $\mathbf{A}$ and $\mathbf{B}$ is expressed as follows:
	\begin{equation}
		\mathbf{A}\btimes \mathbf{B}:= (\mathbf{A}\otimes\mathbf{1_{t/n}^{T}})(\mathbf{B}\otimes\mathbf{1_{t/p}}),
	\end{equation}
	where $\mathbf{A}\in \mathcal{M}_{m\times n}, \mathbf{B}\in \mathcal{M}_{p\times q},t = lcm(n,p).$
\end{definition}

\begin{definition}\label{2.3} \cite{WOS:001286824800001}
	Let $\mathbf{A}\in \mathcal{M}_{m\times n}, \mathbf{B} \in \mathcal{M}_{p\times q}, t = lcm(n,p).$ Then the weighted DK-STP is defined by 
	\begin{equation}
		\mathbf{A} \btimes_{w}\mathbf{B} := (\mathbf{A}\otimes\bm{\varepsilon_{t/n}^{T}})(\mathbf{B}\otimes\bm{\varepsilon_{t/p}}),
	\end{equation}
	where $\bm{\varepsilon_{n}}:= \frac{1}{\sqrt{n}} \mathbf{1_{n}}, \mathbf{A} \btimes_{w}\mathbf{B} \in \mathcal{M}_{m\times q}.$
\end{definition}
In the rest of the paper, we use $\btimes$ to represent $\btimes_{w}.$
\begin{remark}
	All matrix operations in Definitions \ref{2.1}-\ref{2.3} degenerate to the traditional matrix product when the dimensionality compatibility is satisfied. So they can be regarded as an extended form of the traditional matrix product. In Definition \ref{2.3}, the number of rows of matrix $\mathbf{A}$ and the number of columns of matrix $\mathbf{B}$ determine the dimensions of the resulting product matrix. 
\end{remark}
\section{Dimension-keeping semi-tensor compressed sensing}
DK-STP enables the dimension expansion of the measurement matrix $\mathbf{A}$, thus allowing us to transmit a smaller measurement matrix $\hat{\mathbf{A}}$ compared to traditional compressed sensing while still achieving signal reconstruction. In this section, we enhance the traditional CS measurement matrix by integrating the DK-STP framework, thereby developing the DK-STP-CS model proposed in this work. 

Given a measurement matrix $\mathbf{A}\in \mathcal{M}_{m\times n}$, a input signal $\mathbf{x}\in \mathbb{R}^{p}$, a compressed signal $\mathbf{y}\in \mathbb{R}^{m}$ ($m\ll p$), we want to use DK-STP to optimize the measurement matrix $\mathbf{A}$  to achieve the compression of compressed sensing.  The following describes the process of applying the DK-STP to matrix $\mathbf{A}$ and signal $\mathbf{x}$ to derive the enhanced measurement matrix and the compressed signal $\mathbf{y}$. 

\begin{equation}
	\mathbf{y} = \mathbf{A}\btimes \mathbf{x} = (\mathbf{A}\otimes\bm{\varepsilon_{t/n}^{T}})(\mathbf{x}\otimes\bm{\varepsilon_{t/p}}), ~~~~ t = lcm(n,p).
\end{equation}In order to reduce the number of measurements, we choose $n$ to be a factor of $p$ and express its ratio as $\gamma = \frac{p}{n},$ i.e., $n|p$ and $t = p.$ Given the above assumptions, the formula can be expressed as
\begin{equation}
	\mathbf{y} = (\mathbf{A}\otimes\bm{\varepsilon_{\gamma}^{T}})\mathbf{x}.
\end{equation}When $\gamma=1$, DK-STP-CS degenerates into traditional CS. It indicates that DK-STP-CS is an extended form of CS, which can reduce the size of the measurement matrix and realize the role of compressed sensing. Compressed sensing usually processes a sparse signal $\mathbf{x}$ or writes $\mathbf{x}$ in sparse form under some sparse basis. Below we uniformly treat $\mathbf{x}$ as $k$-sparse, that is $l_{0}$ norm$$k = \left \| \mathbf{x} \right \| _{0}.$$ $k$ is just the number of elements in $\left \{i : x_{i}\ne 0  \right \} .$ If $\mathbf{x}\in \sum_{k}$, then we call $\mathbf{x}$ is $k$-sparse.

The measurement matrix needs to meet RIP conditions and incoherence, so we use $\mathbf{A}\otimes\bm{\varepsilon_{\gamma}^{T}}$ instead of $\mathbf{A}$ as the measurement matrix here. Next we will show that this operation is reasonable and give the relevant theoretical proof.
\subsection{Spark}

\begin{definition}\cite{WOS:000263705000002}
	The spark of a given matrix $\mathbf{A}$ is the smallest number
	of columns from $\mathbf{A}$ that are linearly dependent.
\end{definition}
Recall the definition in linear algebra that the largest linearly independent group of a matrix is the rank. Here we use a new definition called spark, called the minimum linearly dependent group. In this way, the uniqueness of the sparse solution can be obtained by studying spark of $\mathbf{A}$. Unfortunately, spark($\mathbf{A}$) is not easy to calculate. We typically study subsets of columns from $\mathbf{A}$.

\begin{definition}\label{3.2}
	If $\mathbf{x}\in \mathbb{R}^{p}$, then $\mathbf{x}^{\gamma}\in \mathbb{R}^{n}$, i.e., 
	\begin{equation}
		\mathbf{x}^{\gamma} := (\sum_{i=1}^{i=\gamma}x_{i} , \sum_{i=\gamma}^{i=2\gamma}x_{i},\cdots,\sum_{i=(n-1)\gamma}^{i=n\gamma}x_{i}),
	\end{equation}
	where $x_{i}$ stands for the $i$ th element in $\mathbf{x}$.
\end{definition}

The definition above means that the consecutive elements of $\mathbf{x}$ are grouped and summed to produce a new $\mathbf{x}^{\gamma}$. If $\gamma = 2$, then
\begin{equation}
	\mathbf{x}^{2} = (x_{1}+x_{2}, x_{3}+x_{4},\cdots, x_{n-1}+x_{n}).
\end{equation}

\begin{lemma}\label{1}\cite{WOS:000263705000002}
	If $k<\frac{spark(\mathbf{A})}{2}$, then for each $\mathbf{y}\in \mathbb{R}^{m}$ there exists at most one signal $\mathbf{x}\in \sum_{k}$ such that $\mathbf{y} = \mathbf{A}\mathbf{x}$.
\end{lemma}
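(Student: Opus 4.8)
The plan is to argue by contradiction, using only the definition of $spark(\mathbf{A})$ together with the elementary fact that the difference of two $k$-sparse vectors is $2k$-sparse. Suppose the conclusion fails: there is a $\mathbf{y}\in\mathbb{R}^{m}$ together with two \emph{distinct} vectors $\mathbf{x},\mathbf{x}'\in\sum_{k}$ satisfying $\mathbf{A}\mathbf{x}=\mathbf{y}=\mathbf{A}\mathbf{x}'$. Put $\mathbf{h}:=\mathbf{x}-\mathbf{x}'$. By distinctness $\mathbf{h}\ne\mathbf{0}$, and by linearity $\mathbf{A}\mathbf{h}=\mathbf{0}$, so $\mathbf{h}$ is a nonzero vector in the kernel of $\mathbf{A}$.

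First I would control the support of $\mathbf{h}$: since $\mathrm{supp}(\mathbf{h})\subseteq\mathrm{supp}(\mathbf{x})\cup\mathrm{supp}(\mathbf{x}')$ and each of the two supports has at most $k$ elements, we obtain $\|\mathbf{h}\|_{0}\le 2k$. Next I would translate $\mathbf{A}\mathbf{h}=\mathbf{0}$ into a column dependence: writing $\mathbf{a}_{i}$ for the $i$-th column of $\mathbf{A}$, the identity $\sum_{i\in\mathrm{supp}(\mathbf{h})}h_{i}\mathbf{a}_{i}=\mathbf{0}$ with not-all-zero coefficients exhibits a set of at most $2k$ columns of $\mathbf{A}$ that is linearly dependent. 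By the definition of spark as the smallest size of a linearly dependent set of columns, this yields $spark(\mathbf{A})\le 2k$.

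Finally, the hypothesis $k<spark(\mathbf{A})/2$ is exactly $2k<spark(\mathbf{A})$, which contradicts $spark(\mathbf{A})\le 2k$. Hence no such pair $\mathbf{x},\mathbf{x}'$ can exist, i.e., each $\mathbf{y}$ is the image of at most one $k$-sparse signal. The argument is entirely routine; the only point warranting a moment of care is checking that the coefficient vector in the column dependence is genuinely nonzero — which holds precisely because $\mathbf{x}\ne\mathbf{x}'$ — so that the relation legitimately witnesses membership in the collection of linearly dependent column sets defining $spark(\mathbf{A})$.
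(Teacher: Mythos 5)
Your proof is correct and is exactly the standard spark-based uniqueness argument: the paper states this lemma without proof, citing Bruckstein--Donoho--Elad, and the argument in that reference is the same contradiction via $\mathbf{h}=\mathbf{x}-\mathbf{x}'$ having at most $2k$ nonzero entries and witnessing a dependent column set, forcing $spark(\mathbf{A})\le 2k$. Your care about the coefficient vector being nonzero (equivalently, $\mathrm{supp}(\mathbf{h})\ne\emptyset$) is the right detail to check, and nothing is missing.
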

\begin{theorem}\label{2}
	If $k<\frac{spark(\mathbf{A})}{2}$, then for each $\mathbf{y}\in \mathbb{R}^{m}$ there exists at most one signal $\mathbf{x}^{\gamma}\in \sum_{k}$ such that $\mathbf{y} = \frac{1}{\sqrt{\gamma}}\mathbf{A}\mathbf{x^{\gamma}}.$
\end{theorem}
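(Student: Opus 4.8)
The plan is to reduce the claim directly to Lemma~\ref{1} by noting that the scalar factor $\frac{1}{\sqrt{\gamma}}$ is irrelevant to the combinatorial quantity that governs uniqueness. Write $\tilde{\mathbf{A}} := \frac{1}{\sqrt{\gamma}}\mathbf{A}$. The first step is to show $\mathrm{spark}(\tilde{\mathbf{A}}) = \mathrm{spark}(\mathbf{A})$: for any column index set $S$, the columns $\{\tilde{\mathbf{a}}_j\}_{j\in S}$ are obtained from $\{\mathbf{a}_j\}_{j\in S}$ by multiplying each by the nonzero constant $1/\sqrt{\gamma}$ (note $\gamma = p/n \geq 1$ is well defined precisely because $n \mid p$), so the two families satisfy the same linear-dependence relations; taking $S$ of minimum cardinality shows the sparks coincide. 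In particular the hypothesis $k < \mathrm{spark}(\mathbf{A})/2$ is literally the statement $k < \mathrm{spark}(\tilde{\mathbf{A}})/2$.

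The second step is to invoke Lemma~\ref{1} with $\tilde{\mathbf{A}}$ playing the role of $\mathbf{A}$: for each $\mathbf{y}\in\mathbb{R}^m$ there is at most one $k$-sparse vector $\mathbf{u}\in\mathbb{R}^n$ with $\mathbf{y} = \tilde{\mathbf{A}}\mathbf{u} = \frac{1}{\sqrt{\gamma}}\mathbf{A}\mathbf{u}$. Since by Definition~\ref{3.2} the aggregated vector $\mathbf{x}^{\gamma}$ lies in $\mathbb{R}^n$, it is an admissible candidate for $\mathbf{u}$, and the conclusion follows verbatim.

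If a self-contained argument is preferred, the equivalent contradiction route is: suppose $\mathbf{x}^{\gamma}$ and $\mathbf{z}^{\gamma}$ are two distinct $k$-sparse vectors with $\mathbf{y} = \frac{1}{\sqrt{\gamma}}\mathbf{A}\mathbf{x}^{\gamma} = \frac{1}{\sqrt{\gamma}}\mathbf{A}\mathbf{z}^{\gamma}$; then $\mathbf{h} := \mathbf{x}^{\gamma} - \mathbf{z}^{\gamma}$ is nonzero, has at most $2k$ nonzero entries, and satisfies $\mathbf{A}\mathbf{h} = \mathbf{0}$ since the common factor $\frac{1}{\sqrt{\gamma}}$ cancels. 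Hence the at most $2k$ columns of $\mathbf{A}$ indexed by $\mathrm{supp}(\mathbf{h})$ are linearly dependent, which gives $\mathrm{spark}(\mathbf{A}) \leq 2k$ and contradicts $k < \mathrm{spark}(\mathbf{A})/2$.

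There is essentially no technical obstacle here; the only point to state carefully is that the theorem asserts uniqueness of the grouped vector $\mathbf{x}^{\gamma}$, not of the original signal $\mathbf{x}$ — the map $\mathbf{x}\mapsto\mathbf{x}^{\gamma}$ of Definition~\ref{3.2} is many-to-one, so many input signals produce the same measurement $\mathbf{y}$ and only their aggregates are determined. It is also worth recording explicitly the identity $(\mathbf{A}\otimes\bm{\varepsilon_{\gamma}^{T}})\mathbf{x} = \frac{1}{\sqrt{\gamma}}\mathbf{A}\mathbf{x}^{\gamma}$, which is what links the DK-STP measurement model of equation~(12) to the hypothesis of the theorem.
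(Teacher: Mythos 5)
Your proposal is correct and follows the same route as the paper, which simply invokes Lemma~\ref{1}; you have merely filled in the (genuinely trivial) details that the nonzero scalar $\frac{1}{\sqrt{\gamma}}$ does not change the spark and that $\mathbf{x}^{\gamma}\in\mathbb{R}^{n}$ is an admissible argument for that lemma. Your closing caveat --- that uniqueness is asserted only for the aggregated vector $\mathbf{x}^{\gamma}$ and not for $\mathbf{x}$ itself --- is exactly the distinction the paper draws later in Theorem~\ref{3.5}.
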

\begin{proof}
	This is easily obtained by \textbf{Lemma} \ref{1}.
\end{proof}

The use of $\mathbf{x^{\gamma}}$ here is equivalent to treating some variable sum of the original vector as a new component. So back to our new measurement matrix which is $\mathbf{A}\otimes\bm{\varepsilon_{\gamma}^{T}}$, we find spark($\mathbf{A}\otimes\bm{\varepsilon_{\gamma}^{T}}$) = 2. This contradicts our principle of choosing a measurement matrix, we want the spark of the measurement matrix to be as large as possible. We can represent \begin{equation}\label{eq1}\small
	\mathbf{A}\otimes\bm{\varepsilon_{\gamma}^{T}}=\frac{1}{\sqrt{\gamma } } \begin{bmatrix}
		a_{11} & \cdots  & a_{11} &   \cdots & a_{1n} & \cdots & a_{1n}\\
		a_{21} & \cdots & a_{21} &    \cdots & a_{2n} & \cdots & a_{2n}\\
		\vdots  &   & \vdots &   & \vdots & &\vdots \\
		{} a_{n1} & \cdots & a_{n1}  & \cdots &a_{nn} & \cdots & a_{nn }
	\end{bmatrix}.
\end{equation}Then it is easy to verify the following equation $(\mathbf{A}\otimes\bm{\varepsilon_{\gamma}^{T}})x = \frac{1}{\sqrt{\gamma}}\mathbf{A}\mathbf{x^{\gamma}}.$
So we turned the problem of $\mathbf{y} = (\mathbf{A}\otimes\bm{\varepsilon_{\gamma}^{T}})\mathbf{x}$ into $\mathbf{y} = \frac{1}{\sqrt{\gamma}}\mathbf{A}\mathbf{x^{\gamma}}.$ This is different from what we required for spark before, now that the adjacent elements are related, we choose to deal with the sum value problem of vectors. The converted problem actually goes back to the traditional CS problem.
\begin{theorem}\label{3.5}
	If $k<\frac{spark(\mathbf{A})}{2}$, then for each $\mathbf{y}\in \mathbb{R}^{m}$ there exists one signal $\mathbf{x}\in \sum_{k}$ such that $\mathbf{y} = \mathbf{A}\btimes \mathbf{x}.$
\end{theorem}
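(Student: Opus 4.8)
The plan is to treat the statement as an \emph{existence} guarantee---there is at least one $k$-sparse $\mathbf{x}$ with $\mathbf{y}=\mathbf{A}\btimes\mathbf{x}$---and to obtain such an $\mathbf{x}$ by reducing the DK-STP equation to an ordinary sparse-recovery problem and then lifting its solution. The starting point is the identity already recorded above, $\mathbf{A}\btimes\mathbf{x}=(\mathbf{A}\otimes\bm{\varepsilon_{\gamma}^{T}})\mathbf{x}=\frac{1}{\sqrt{\gamma}}\mathbf{A}\mathbf{x}^{\gamma}$, so that the constraint $\mathbf{y}=\mathbf{A}\btimes\mathbf{x}$ is literally the same as $\mathbf{y}=\frac{1}{\sqrt{\gamma}}\mathbf{A}\mathbf{x}^{\gamma}$. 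The unknown therefore enters only through its block-sum $\mathbf{x}^{\gamma}\in\mathbb{R}^{n}$, and the task splits into (i) solving the reduced system for $\mathbf{x}^{\gamma}$ and (ii) exhibiting a $k$-sparse $\mathbf{x}$ whose block-sum equals that solution.

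For step (i) I would apply Theorem \ref{2} to the equation $\mathbf{y}=\frac{1}{\sqrt{\gamma}}\mathbf{A}\mathbf{z}$ in the unknown $\mathbf{z}\in\mathbb{R}^{n}$: since $\mathbf{y}$ is a compressed measurement of some $k$-sparse signal, a $k$-sparse $\mathbf{z}$ exists, and the hypothesis $k<\frac{spark(\mathbf{A})}{2}$ makes it the unique such vector; write it $\mathbf{z}^{\star}$. For step (ii) I would lift $\mathbf{z}^{\star}$ explicitly: for every index $j$ with $z^{\star}_{j}\neq0$, place the value $z^{\star}_{j}$ in the first coordinate of the $j$th length-$\gamma$ block of $\mathbf{x}$ and set all remaining coordinates to zero. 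By Definition \ref{3.2} this forces the $j$th block-sum to equal $z^{\star}_{j}$, so $\mathbf{x}^{\gamma}=\mathbf{z}^{\star}$, while $\|\mathbf{x}\|_{0}=\|\mathbf{z}^{\star}\|_{0}\le k$, hence $\mathbf{x}\in\sum_{k}$. Consequently $\mathbf{A}\btimes\mathbf{x}=\frac{1}{\sqrt{\gamma}}\mathbf{A}\mathbf{z}^{\star}=\mathbf{y}$, which is precisely the required signal. The lift is always admissible because $k<\frac{spark(\mathbf{A})}{2}\le\frac{n}{2}$, so the $n$ available blocks comfortably accommodate the at-most-$k$ nonzeros, one per block.

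The principal difficulty is interpretive rather than computational, and I would confront it directly. Because $\mathbf{A}$ has fewer rows than the length of $\mathbf{x}^{\gamma}$, the reachable set $\{\mathbf{A}\btimes\mathbf{x}:\mathbf{x}\in\sum_{k}\}=\frac{1}{\sqrt{\gamma}}\mathbf{A}\cdot\{\mathbf{z}\in\mathbb{R}^{n}:\|\mathbf{z}\|_{0}\le k\}$ is a finite union of subspaces of dimension at most $k<m$, which cannot fill $\mathbb{R}^{m}$; thus the clause ``for each $\mathbf{y}\in\mathbb{R}^{m}$'' must be read in the usual compressed-sensing sense that $\mathbf{y}$ is a genuine measurement of a $k$-sparse signal, and I would state this feasibility restriction explicitly rather than gloss over it. The second subtlety is that ``one signal'' should be understood as existence, not uniqueness: Theorem \ref{2} pins the block-sum $\mathbf{x}^{\gamma}$ down uniquely, yet the lift is not unique, since redistributing each $z^{\star}_{j}$ within its block produces other $k$-sparse preimages carrying the same measurement. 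I would therefore frame the conclusion as the existence of a recoverable $k$-sparse representative whose block-sum is the unique solution of the reduced system---exactly the object the DK-STP reconstruction returns---while flagging that uniqueness holds at the level of $\mathbf{x}^{\gamma}$, not of $\mathbf{x}$ itself.
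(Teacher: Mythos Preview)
Your proof follows essentially the same route as the paper: reduce $\mathbf{y}=\mathbf{A}\btimes\mathbf{x}$ to $\mathbf{y}=\frac{1}{\sqrt{\gamma}}\mathbf{A}\mathbf{x}^{\gamma}$, invoke Theorem~\ref{2} to pin down $\mathbf{x}^{\gamma}$ uniquely under the spark hypothesis, and then lift back to some $\mathbf{x}$ while conceding that this lift is not unique. Your execution is in fact tighter than the paper's on two points: the paper proposes an ``equalization'' lift that spreads each component of $\mathbf{x}^{\gamma}$ evenly over its block, which produces up to $\gamma k$ nonzeros and so need not lie in $\sum_{k}$, whereas your one-nonzero-per-block lift genuinely lands in $\sum_{k}$ as the theorem requires; and you make explicit the feasibility restriction on $\mathbf{y}$ that the paper leaves implicit.
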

\begin{proof}
	If $k<\frac{spark(\mathbf{A})}{2}$, then $\left \| \mathbf{x} \right \| _{0} = k.$ By \textbf{Definition} \ref{3.2}, $$\left \| \mathbf{x^{\gamma}} \right \| _{0} \le \left \| \mathbf{x} \right \| _{0} = k,$$
	hence we can get at most one signal $\mathbf{x}^{\gamma}$ by \textbf{Theorem} \ref{2}. However, $\mathbf{x^{\gamma}}$ determines the sum of the continuous components of the original signal $\mathbf{x}$, and cannot determine the specific distribution in the original signal, but for the relatively smooth vector signal, it can have a good reduction result. It will not cause too much distortion, especially for image signals. 
	
	We can obtain a signal $\mathbf{x}$ from $\mathbf{x^{\gamma}}$ by adopting some distribution principle, such as equalization. But note that this distribution is not unique. In the experiment, the adjacent values are relatively close, and the common sum values are unchanged, but it is not an accurate average distribution.
\end{proof}
\subsection{Coherence}
In traditional compressed sensing, the coherence $\mu(\mathbf{A})$ of matrix $\mathbf{A}$ is defined as$$\mu(\mathbf{A}) = \max_{1\le i \ne j \le n }  \frac{\left | \left \langle \mathbf{a_{i}},\mathbf{a_{j}} \right \rangle  \right | }{\left \| \mathbf{a_{i}} \right \|_{2} \left \| \mathbf{a_{j}} \right \|_{2}} .$$
This has been pointed out in previous studies $\mu(\mathbf{A}) \in [\sqrt{\frac{n-m}{m(n-1)} },1]$\cite{WOS:A1974T037500022}\cite{WOS:000183778200004}.
For matrix $\mathbf{A}\otimes\bm{\varepsilon_{\gamma}^{T}}$, its coherence $\mu(\mathbf{A}\otimes\bm{\varepsilon_{\gamma}^{T}}) = 1.$ But we want $\mu(\mathbf{A}\otimes\bm{\varepsilon_{\gamma}^{T}})$ to be as small as possible. We can solve the above problem by analyzing $x^{\gamma}$.

For any matrix $\mathbf{A}$, $spark(\mathbf{A}) \le 1+\frac{1}{\mu(\mathbf{A})}.$ We have the following
corollary from \textbf{Theorem} \ref{2} and \ref{3.5}.
\begin{corollary}\label{3.6}
	If $k<\frac{1}{2} (1+\frac{1}{\mu(\mathbf{A})})$, then for each $\mathbf{y}\in \mathbb{R}^{m}$ there exists at most one signal $\mathbf{x}^{\gamma}\in \sum_{k}$ such that $\mathbf{y} = \frac{1}{\sqrt{\gamma}}\mathbf{A}\mathbf{x^{\gamma}}.$
\end{corollary}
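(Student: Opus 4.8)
The plan is to obtain the corollary directly from \textbf{Theorem}~\ref{2} by replacing $spark(\mathbf{A})$ with its lower bound in terms of the coherence $\mu(\mathbf{A})$. The classical fact I would invoke is that every matrix $\mathbf{A}$ with nonzero columns satisfies $spark(\mathbf{A}) \ge 1 + \frac{1}{\mu(\mathbf{A})}$; note that this is the direction actually needed for a uniqueness statement (the reverse inequality would be useless here), so some care is warranted in quoting it correctly.

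To justify this bound I would argue as follows. Take any collection of $r$ columns of $\mathbf{A}$ and rescale each to unit Euclidean norm, which does not affect linear (in)dependence. Form their Gram matrix $G \in \mathcal{M}_{r\times r}$: by definition of $\mu(\mathbf{A})$, $G$ has ones on the diagonal and off-diagonal entries of modulus at most $\mu(\mathbf{A})$. If $r < 1 + \frac{1}{\mu(\mathbf{A})}$, then $(r-1)\mu(\mathbf{A}) < 1$, so each row of $G$ is strictly diagonally dominant, hence $G$ is nonsingular by Gershgorin's theorem, which forces the chosen $r$ columns to be linearly independent. Since every sub-collection of size less than $1 + \frac{1}{\mu(\mathbf{A})}$ is thus independent, no linearly dependent set of columns can be that small, i.e. $spark(\mathbf{A}) \ge 1 + \frac{1}{\mu(\mathbf{A})}$.

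With this in hand the corollary is immediate: the hypothesis $k < \frac{1}{2}\bigl(1 + \frac{1}{\mu(\mathbf{A})}\bigr)$ together with the bound gives $k < \frac{1}{2}\bigl(1 + \frac{1}{\mu(\mathbf{A})}\bigr) \le \frac{spark(\mathbf{A})}{2}$, and the strictness of the first inequality is preserved through the chain, so the strict hypothesis $k < \frac{spark(\mathbf{A})}{2}$ required by \textbf{Theorem}~\ref{2} holds. Applying that theorem then yields at most one $\mathbf{x}^{\gamma}\in\sum_{k}$ with $\mathbf{y} = \frac{1}{\sqrt{\gamma}}\mathbf{A}\mathbf{x}^{\gamma}$ (equivalently, since $\mathbf{x}\mapsto\mathbf{x}^{\gamma}$ is onto $\mathbb{R}^{n}$, this is simply the assertion that $\mathbf{A}\mathbf{z} = \sqrt{\gamma}\,\mathbf{y}$ has at most one $k$-sparse solution, which is \textbf{Lemma}~\ref{1}). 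There is essentially no technical obstacle in this argument; the only genuine pitfall is the orientation of the spark–coherence inequality and ensuring that the strict inequality survives the substitution.
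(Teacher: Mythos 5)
Your proof is correct and follows the same route the paper intends: deduce the corollary from \textbf{Theorem}~\ref{2} via the spark--coherence bound. It is worth noting that the paper states that bound as $spark(\mathbf{A}) \le 1+\frac{1}{\mu(\mathbf{A})}$, which is the reverse of the direction the deduction actually requires; your version $spark(\mathbf{A}) \ge 1+\frac{1}{\mu(\mathbf{A})}$, justified by the Gershgorin/diagonal-dominance argument on the normalized Gram matrix, is the correct classical inequality, so your write-up both completes the omitted argument and fixes the orientation of the auxiliary fact as quoted in the text.
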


\begin{corollary}
	If $k<\frac{1}{2} (1+\frac{1}{\mu(\mathbf{A})})$, then for each $\mathbf{y}\in \mathbb{R}^{m}$ there exists one signal $\mathbf{x}\in \sum_{k}$ such that $\mathbf{y} = \mathbf{A}\btimes \mathbf{x}.$
\end{corollary}
From here we can see that the measurement matrix satisfies a certain correlation and can also achieve compressed sensing. We call it intra-group correlation. Next, the definition of intra-group correlation is given.

\begin{definition}\label{3.8}
	If the columns of matrix $\mathbf{A}$ are equally divided into $n$ groups which have the same column vectors, and these vectors are incoherent to each other, then we call this intra-group correlation.
\end{definition}
To clarify the above definition, we give a Gaussian random matrix $\mathbf{A}$, which is a matrix with columns incoherence. Then $\mathbf{A}\otimes\bm{\varepsilon_{\gamma}^{T}}$ is a matrix that satisfies the \textbf{Definition} \ref{3.8}, and it can be inferred that there is still a unique solution for $\mathbf{x^{\gamma}}$ by \textbf{Corollary} \ref{3.6}. This indicates that after the column grouping of the measurement matrix, if the vectors between groups remain uncorrelated while those within groups are correlated, a very good reconstruction effect can still be achieved in terms of images.

When there are few vectors in the group, this correlation actually does not affect us to do compressed sensing, especially in image compressed sensing. This part will be discussed in the following experiment.
\subsection{Restricted isometry property}

\begin{lemma}\label{7}\cite{WOS:000384706700008}
	For any $p$-dimensional vector $\mathbf{x} \in \sum_{k}$, if the sensing matrix $\mathbf{A}$ satisfies the RIP of order $2k$ with the constant $\delta_{2k}^{\mathbf{A}} < \sqrt{2} - 1 $, then the exact recovery in CS model is possible by the $l_{1}$ optimization.
\end{lemma}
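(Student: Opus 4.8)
The statement is the classical RIP-based exact-recovery guarantee; the plan is to prove it by contradiction, following the standard cone/null-space argument. Suppose $\mathbf{x}\in\sum_{k}$ with $\mathbf{y}=\mathbf{A}\mathbf{x}$, let $\mathbf{x}^{*}$ be any solution of the $l_{1}$ minimization problem $\min\|\mathbf{z}\|_{1}$ subject to $\mathbf{A}\mathbf{z}=\mathbf{y}$, and set $\mathbf{h}=\mathbf{x}^{*}-\mathbf{x}$. Since $\mathbf{x}$ and $\mathbf{x}^{*}$ are both feasible, $\mathbf{A}\mathbf{h}=\mathbf{0}$, so it suffices to show $\mathbf{h}=\mathbf{0}$. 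Let $T_{0}$ be the support of $\mathbf{x}$, so $|T_{0}|\le k$.

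First I would extract the cone condition: optimality gives $\|\mathbf{x}\|_{1}\ge\|\mathbf{x}+\mathbf{h}\|_{1}$, and splitting the right-hand side over $T_{0}$ and $T_{0}^{c}$ and applying the triangle inequality yields $\|\mathbf{h}_{T_{0}^{c}}\|_{1}\le\|\mathbf{h}_{T_{0}}\|_{1}$. Next comes the ``shelling'' step: partition $T_{0}^{c}$ into consecutive index blocks $T_{1},T_{2},\dots$ of size $k$, ordered so that $T_{1}$ collects the $k$ largest-magnitude entries of $\mathbf{h}$ on $T_{0}^{c}$, $T_{2}$ the next $k$, and so on; put $T_{01}=T_{0}\cup T_{1}$. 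A routine comparison of $l_{2}$ and $l_{1}$ norms between successive blocks gives $\|\mathbf{h}_{T_{j}}\|_{2}\le k^{-1/2}\|\mathbf{h}_{T_{j-1}}\|_{1}$ for $j\ge2$, hence $\sum_{j\ge2}\|\mathbf{h}_{T_{j}}\|_{2}\le k^{-1/2}\|\mathbf{h}_{T_{0}^{c}}\|_{1}\le k^{-1/2}\|\mathbf{h}_{T_{0}}\|_{1}\le\|\mathbf{h}_{T_{0}}\|_{2}\le\|\mathbf{h}_{T_{01}}\|_{2}$.

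The core step uses the RIP of order $2k$ twice. Since $\mathbf{A}\mathbf{h}=\mathbf{0}$, we have $\mathbf{A}\mathbf{h}_{T_{01}}=-\sum_{j\ge2}\mathbf{A}\mathbf{h}_{T_{j}}$. I would first record the near-orthogonality fact that $|\langle\mathbf{A}\mathbf{u},\mathbf{A}\mathbf{v}\rangle|\le\delta_{2k}^{\mathbf{A}}\|\mathbf{u}\|_{2}\|\mathbf{v}\|_{2}$ whenever $\mathbf{u},\mathbf{v}$ have disjoint supports of combined size at most $2k$, which follows by applying the RIP inequality to $\mathbf{u}\pm\mathbf{v}$ and subtracting. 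Then, applying the lower RIP bound to $\mathbf{h}_{T_{01}}$,
\begin{equation*}
(1-\delta_{2k}^{\mathbf{A}})\|\mathbf{h}_{T_{01}}\|_{2}^{2}\le\|\mathbf{A}\mathbf{h}_{T_{01}}\|_{2}^{2}\le\delta_{2k}^{\mathbf{A}}\big(\|\mathbf{h}_{T_{0}}\|_{2}+\|\mathbf{h}_{T_{1}}\|_{2}\big)\sum_{j\ge2}\|\mathbf{h}_{T_{j}}\|_{2}\le\sqrt{2}\,\delta_{2k}^{\mathbf{A}}\|\mathbf{h}_{T_{01}}\|_{2}^{2},
\end{equation*}
where the last step uses $\|\mathbf{h}_{T_{0}}\|_{2}+\|\mathbf{h}_{T_{1}}\|_{2}\le\sqrt{2}\,\|\mathbf{h}_{T_{01}}\|_{2}$ together with the block bound of the previous paragraph. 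If $\mathbf{h}_{T_{01}}\ne\mathbf{0}$, dividing through forces $1-\delta_{2k}^{\mathbf{A}}\le\sqrt{2}\,\delta_{2k}^{\mathbf{A}}$, i.e. $\delta_{2k}^{\mathbf{A}}\ge(1+\sqrt{2})^{-1}=\sqrt{2}-1$, contradicting the hypothesis. Hence $\mathbf{h}_{T_{01}}=\mathbf{0}$; in particular $\mathbf{h}_{T_{0}}=\mathbf{0}$, and the cone condition then gives $\mathbf{h}_{T_{0}^{c}}=\mathbf{0}$, so $\mathbf{x}^{*}=\mathbf{x}$, which is exact recovery.

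The step I expect to be the main obstacle is the core estimate: deriving the near-orthogonality bound from the RIP and, more delicately, threading the constant $\sqrt{2}$ through the chain of inequalities so that the threshold comes out exactly as $\sqrt{2}-1$. The block size must be chosen equal to $k$ so that every index set appearing ($T_{01}$, and each $T_{i}\cup T_{j}$) has size at most $2k$, since a different block size would alter the final constant; the remaining pieces are standard triangle-inequality and norm-comparison bookkeeping.
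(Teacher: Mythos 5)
The paper does not actually prove this lemma --- it is quoted directly from the cited reference (and ultimately goes back to Cand\`es' 2008 note on the RIP), so there is no in-paper argument to compare against. Your reconstruction is the standard cone/shelling proof and is correct: in particular you handle the one delicate point properly, splitting $\mathbf{h}_{T_{01}}$ into $\mathbf{h}_{T_{0}}+\mathbf{h}_{T_{1}}$ before invoking the near-orthogonality bound so that every inner product involves disjoint supports of total size at most $2k$ (rather than $3k$), which is exactly what lets the RIP constant of order $2k$ suffice and makes the threshold come out as $\sqrt{2}-1$.
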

The theorem above shows us that if the sensing matrix meets the RIP condition, the precise recovery of compressed sensing can be achieved through $l_{1}$ optimization. 
\begin{theorem}
	If $\mathbf{A}$ satisfies the RIP of order $2k$ with constant $\delta_{2k}^{\mathbf{A}} < \sqrt{2} - 1 $, then the exact recovery for $\mathbf{y} = \frac{1}{\sqrt{\gamma}}\mathbf{A}\mathbf{x^{\gamma}}$ is possible by the $l_{1}$ optimization.
\end{theorem}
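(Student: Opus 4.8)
The plan is to reduce the statement to the classical $l_{1}$ recovery guarantee quoted in Lemma~\ref{7}, by observing that after the substitution introduced above the effective sensing problem is an ordinary CS problem governed by $\mathbf{A}$ itself. First I would recall the identity $(\mathbf{A}\otimes\bm{\varepsilon_{\gamma}^{T}})\mathbf{x} = \frac{1}{\sqrt{\gamma}}\mathbf{A}\mathbf{x}^{\gamma}$ established in the Spark subsection, so that sensing $\mathbf{x}$ through the enhanced matrix $\mathbf{A}\otimes\bm{\varepsilon_{\gamma}^{T}}$ is the same as sensing the grouped-sum vector $\mathbf{x}^{\gamma}$ through a scalar multiple of $\mathbf{A}$. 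Then I would invoke the sparsity transfer noted in the proof of Theorem~\ref{3.5}: since each entry of $\mathbf{x}^{\gamma}$ is a sum of $\gamma$ consecutive entries of $\mathbf{x}$, we have $\|\mathbf{x}^{\gamma}\|_{0}\le\|\mathbf{x}\|_{0}$, so $\mathbf{x}\in\sum_{k}$ forces $\mathbf{x}^{\gamma}\in\sum_{k}$; the recovery target is thus itself $k$-sparse.

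Next I would verify that the hypothesis of Lemma~\ref{7} is met by this reformulation. The constraint $\mathbf{y} = \frac{1}{\sqrt{\gamma}}\mathbf{A}\mathbf{x}^{\gamma}$ is equivalent to $\sqrt{\gamma}\,\mathbf{y} = \mathbf{A}\mathbf{x}^{\gamma}$, and the program $\min_{\mathbf{s}}\|\mathbf{s}\|_{1}$ subject to $\mathbf{y}=\frac{1}{\sqrt{\gamma}}\mathbf{A}\mathbf{s}$ has exactly the same feasible set, and hence the same minimizer, as $\min_{\mathbf{s}}\|\mathbf{s}\|_{1}$ subject to $\sqrt{\gamma}\,\mathbf{y}=\mathbf{A}\mathbf{s}$: multiplying the linear constraint by a nonzero scalar changes nothing. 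Since $\mathbf{A}$ is assumed to satisfy the RIP of order $2k$ with $\delta_{2k}^{\mathbf{A}}<\sqrt{2}-1$, Lemma~\ref{7} applies verbatim to the reformulated problem and yields exact reconstruction of the $k$-sparse vector $\mathbf{x}^{\gamma}$ by $l_{1}$ optimization.

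I expect the only delicate point to be the bookkeeping around the normalization $\frac{1}{\sqrt{\gamma}}$: one must state explicitly that RIP concerns the distortion ratio $\|\mathbf{A}\mathbf{z}\|_{2}^{2}/\|\mathbf{z}\|_{2}^{2}$, so the scalar merely rescales the measurement vector and leaves both the RIP constant of the governing matrix and the $l_{1}$ minimizer untouched; once this is said there is no further obstacle, the result being essentially a corollary of Lemma~\ref{7} together with the sparsity-preservation of the map $\mathbf{x}\mapsto\mathbf{x}^{\gamma}$. As a caveat worth recording (echoing Theorem~\ref{3.5}), what is guaranteed is exact recovery of $\mathbf{x}^{\gamma}$, not of $\mathbf{x}$ itself, since the grouped sums do not pin down the intra-group distribution of the signal.
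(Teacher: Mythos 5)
Your proposal is correct and follows essentially the same route as the paper: both arguments reduce the claim to Lemma~\ref{7} by observing that the scalar $\frac{1}{\sqrt{\gamma}}$ is immaterial (the paper absorbs it into the RIP inequality, you absorb it into the linear constraint of the $l_{1}$ program) so that $\mathbf{A}$'s RIP constant governs the recovery of the $k$-sparse vector $\mathbf{x}^{\gamma}$. Your version is in fact slightly more complete, since you make explicit the sparsity transfer $\|\mathbf{x}^{\gamma}\|_{0}\le\|\mathbf{x}\|_{0}$ and the caveat that only $\mathbf{x}^{\gamma}$, not $\mathbf{x}$, is recovered exactly.
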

\begin{proof}
	If A satisfies the RIP of order $2k$ with constant $\delta_{2k}^{\mathbf{A}}$, then $$(1-\delta_{2k}^{\mathbf{A}})\left \| \mathbf{x} \right \| _{2}^{2} \le \left \| \mathbf{Ax} \right \| _{2}^{2} \le (1+\delta_{2k}^{\mathbf{A}})\left \| \mathbf{x} \right \| _{2}^{2}.$$
	Take $\mathbf{x} = \frac{1}{\sqrt{\gamma}}\mathbf{x}^{\gamma}$ , we find $$(1-\delta_{2k}^{\mathbf{A}})\left \| \mathbf{x}^{\gamma} \right \| _{2}^{2} \le \left \| \mathbf{A\mathbf{x}^{\gamma}} \right \| _{2}^{2} \le (1+\delta_{2k}^{\mathbf{A}})\left \| \mathbf{x}^{\gamma} \right \| _{2}^{2}.$$
	By \textbf{Lemma} \ref{7}, the theorem has been proved.
\end{proof}
\subsection{Error analysis}
Reconstruction error analysis is an important part, as it is directly related to the reconstruction performance and serves as a key indicator of the effectiveness of the method. This subsection analyzes the reconstruction error in the proposed compressed sensing based dimension-keeping semi-tensor product  method, which is attributed to three fundamental sources: the original signal error, the compressed sensing error, and the distribution error, thereby providing a fundamental explanation of the error causes and highlighting the superiority of our method. To this end, some symbol definitions are provided.
\begin{itemize}
	\item $x^{*}$ represents the reconstructed signal for original signal $\mathbf{x}$.
	\item $x_{i}$ represents each component of the signal $\mathbf{x}$.
	\item $x_{i}^{*}$ represents each component of the signal $\mathbf{x^{*}}$.
	\item $\mathbf{x^{\gamma *}}$ represents the sum of each component of the reconstructed signal $\mathbf{x^{*}}$.
	$$\mathbf{x}^{\gamma *} := (\sum_{i=1}^{i=\gamma}x_{i}^{*} , \sum_{i=\gamma}^{i=2\gamma}x_{i}^{*},\cdots,\sum_{i=(n-1)\gamma}^{i=n\gamma}x_{i}^{*}).$$
	\item $\frac{\mathbf{x^{\gamma }}}{\gamma}$ and $\frac{\mathbf{x^{\gamma *}}}{\gamma}$ represent the average number. Their dimensions are different from the original signal $\mathbf{x}$, so each component is repeated $\gamma$ times to obtain $\mathbf{\bar{x}}$ and $\mathbf{\bar{x^{*}}}$ with the same dimension as the original signal $x$. Such as $$\mathbf{\bar{x}}:= (\underbrace{\sum_{i=1}^{i=\gamma}\frac{x_{i}}{\gamma} ,\sum_{i=1}^{i=\gamma}\frac{x_{i}}{\gamma},\cdots,\sum_{i=1}^{i=\gamma}\frac{x_{i}}{\gamma}}_{\gamma}, \sum_{i=\gamma}^{i=2\gamma}\frac{x_{i}}{\gamma},\cdots,\sum_{i=(n-1)\gamma}^{i=n\gamma}\frac{x_{i}}{\gamma}).$$
\end{itemize} 
 
\begin{theorem}\label{4.11}
	For the estimation of $\left \|\mathbf{x^{*}} - \mathbf{x}  \right \|_{2}$, the following norm inequality holds$$\left \| \mathbf{x^{*}} - \mathbf{x} \right\|_{2} \le \left \|\mathbf{x^{*}}-\mathbf{\bar{x}} \right \|_{1} + \left \|\mathbf{x^{\gamma}}-\mathbf{x^{\gamma *}}\right \|_{1} +\left \|\mathbf{\bar{x}}- \mathbf{x} \right \|_{1}.$$
\end{theorem}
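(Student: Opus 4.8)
The plan is to prove the bound using only two elementary ingredients: the coordinatewise inequality $\|\mathbf{v}\|_2\le\|\mathbf{v}\|_1$, valid for every vector $\mathbf{v}$, and the triangle inequality for $\|\cdot\|_1$ applied after inserting the block-averaged surrogate vectors $\bar{\mathbf{x}}$ and $\bar{\mathbf{x}^{*}}$ as intermediate points. First I would pass from the $\ell_2$ to the $\ell_1$ norm, $\|\mathbf{x}^{*}-\mathbf{x}\|_2\le\|\mathbf{x}^{*}-\mathbf{x}\|_1$, and then work entirely in the $\ell_1$ norm, which is what makes the subsequent block computation transparent.

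In fact the stated inequality already follows from the two-term split $\mathbf{x}^{*}-\mathbf{x}=(\mathbf{x}^{*}-\bar{\mathbf{x}})+(\bar{\mathbf{x}}-\mathbf{x})$, since $\|\mathbf{x}^{*}-\mathbf{x}\|_1\le\|\mathbf{x}^{*}-\bar{\mathbf{x}}\|_1+\|\bar{\mathbf{x}}-\mathbf{x}\|_1$ and the remaining term $\|\mathbf{x}^{\gamma}-\mathbf{x}^{\gamma *}\|_1$ is nonnegative and can simply be added to the right-hand side. To make all three summands meaningful — matching the distribution error, the compressed-sensing error, and the original-signal error discussed above — I would instead telescope through $\bar{\mathbf{x}^{*}}$:
\[
\mathbf{x}^{*}-\mathbf{x}=(\mathbf{x}^{*}-\bar{\mathbf{x}^{*}})+(\bar{\mathbf{x}^{*}}-\bar{\mathbf{x}})+(\bar{\mathbf{x}}-\mathbf{x}),
\]
apply the triangle inequality term by term, and then identify the middle term.

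The only step that is not pure bookkeeping is the identity $\|\bar{\mathbf{x}^{*}}-\bar{\mathbf{x}}\|_1=\|\mathbf{x}^{\gamma}-\mathbf{x}^{\gamma *}\|_1$. I would check it group by group: on the $j$-th block of $\gamma$ consecutive coordinates, $\bar{\mathbf{x}^{*}}-\bar{\mathbf{x}}$ equals the constant $\tfrac{1}{\gamma}(x^{\gamma *}_j-x^{\gamma}_j)$ repeated $\gamma$ times, so that block contributes $\gamma\cdot\tfrac{1}{\gamma}\,|x^{\gamma}_j-x^{\gamma *}_j|=|x^{\gamma}_j-x^{\gamma *}_j|$ to the $\ell_1$ norm; summing over the $n$ blocks gives the identity. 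Combining this with the $\ell_1$ triangle inequality yields $\|\mathbf{x}^{*}-\mathbf{x}\|_2\le\|\mathbf{x}^{*}-\bar{\mathbf{x}^{*}}\|_1+\|\mathbf{x}^{\gamma}-\mathbf{x}^{\gamma *}\|_1+\|\bar{\mathbf{x}}-\mathbf{x}\|_1$, from which the stated form (with $\bar{\mathbf{x}}$ in the first slot) follows a fortiori as remarked above.

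I do not expect a genuine analytic obstacle; the one thing to handle carefully is the normalization in the block identity. The averaging vector $\bar{\mathbf{x}}$ is built from $\mathbf{x}^{\gamma}/\gamma$ (an arithmetic mean), whereas the weighted DK-STP measurement model carries a $1/\sqrt{\gamma}$ factor, $\mathbf{y}=\tfrac{1}{\sqrt{\gamma}}\mathbf{A}\mathbf{x}^{\gamma}$; one must verify that the factor that survives in $\|\bar{\mathbf{x}^{*}}-\bar{\mathbf{x}}\|_1$ is exactly $\gamma\cdot(1/\gamma)=1$ per block and not $1/\sqrt{\gamma}$. A secondary, purely notational point: in \textbf{Definition} \ref{3.2} the summation ranges as written overlap at their endpoints, so I would read $\mathbf{x}^{\gamma}$ as the non-overlapping block sums $\left(\sum_{i=(j-1)\gamma+1}^{j\gamma}x_i\right)_{j=1}^{n}$, which is what makes the blocks partition the coordinates of $\mathbf{x}$ and the telescoping exact.
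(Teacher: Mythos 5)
Your proof is correct, and it differs from the paper's in a way worth making explicit. The paper uses the same three-term telescoping $\mathbf{x^{*}}-\mathbf{x}=(\mathbf{x^{*}}-\mathbf{\bar{x^{*}}})+(\mathbf{\bar{x^{*}}}-\mathbf{\bar{x}})+(\mathbf{\bar{x}}-\mathbf{x})$, but then performs an extra reshuffle to replace $\left\|\mathbf{x^{*}}-\mathbf{\bar{x^{*}}}\right\|_{1}$ by $\left\|\mathbf{x^{*}}-\mathbf{\bar{x}}\right\|_{1}$, reaching the intermediate bound $\left\|\mathbf{x^{*}}-\mathbf{\bar{x}}\right\|_{1}+2\left\|\mathbf{\bar{x}}-\mathbf{\bar{x^{*}}}\right\|_{1}+\left\|\mathbf{\bar{x}}-\mathbf{x}\right\|_{1}$, and finally asserts $2\left\|\mathbf{\bar{x}}-\mathbf{\bar{x^{*}}}\right\|_{1}=\left\|\mathbf{x^{\gamma}}-\mathbf{x^{\gamma *}}\right\|_{1}$. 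Your block-by-block computation shows that in fact $\left\|\mathbf{\bar{x}}-\mathbf{\bar{x^{*}}}\right\|_{1}=\left\|\mathbf{x^{\gamma}}-\mathbf{x^{\gamma *}}\right\|_{1}$ (each of the $n$ blocks contributes $\gamma\cdot\tfrac{1}{\gamma}$ times the corresponding difference of block sums), so the paper's final equality is off by a factor of $2$ and its route, taken literally, only delivers the weaker bound with $2\left\|\mathbf{x^{\gamma}}-\mathbf{x^{\gamma *}}\right\|_{1}$ in the middle. Your argument avoids this: the two-term split $\left\|\mathbf{x^{*}}-\mathbf{x}\right\|_{1}\le\left\|\mathbf{x^{*}}-\mathbf{\bar{x}}\right\|_{1}+\left\|\mathbf{\bar{x}}-\mathbf{x}\right\|_{1}$ already implies the stated inequality because the middle term is nonnegative, and your three-term version (with $\mathbf{\bar{x^{*}}}$ rather than $\mathbf{\bar{x}}$ in the first slot) is the version in which all three error sources genuinely appear with the correct constants. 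What each approach buys: the paper's form matches the error taxonomy it wants to discuss (distribution, compressed-sensing, original-signal errors) but its derivation of that exact form is not sound; yours is sound, makes the normalization check explicit, and honestly records that the theorem as stated is trivially true with the middle term playing no role in the bound. Your reading of Definition \ref{3.2} as non-overlapping block sums is also the only one under which the telescoping is exact, and is clearly what the authors intend.
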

\begin{proof}
	Note that
	\begin{align*}
			\left \| \mathbf{x^{*}} - \mathbf{x} \right\|_{1} &= \left \| \mathbf{x^{*}} - \mathbf{\bar{x^{*}}} + \mathbf{\bar{x^{*}}} - \mathbf{\bar{x}}+\mathbf{\bar{x}}-\mathbf{x} \right\|_{1}\\
			&\le \left \|\mathbf{x^{*}} - \mathbf{\bar{x^{*}}} \right \|_{1} + \left \|\mathbf{\bar{x^{*}}} - \mathbf{\bar{x}} \right \|_{1} + \left \|\mathbf{\bar{x}}-\mathbf{x} \right\|_{1}\\
			&=\left \|\mathbf{x^{*}} - \mathbf{\bar{x}}+\mathbf{\bar{x}}-\mathbf{\bar{x^{*}}} \right \|_{1} + \left \|\mathbf{\bar{x^{*}}} - \mathbf{\bar{x}} \right \|_{1} + \left \|\mathbf{\bar{x}}-\mathbf{x} \right\|_{1}\\
			&\le \left \|\mathbf{x^{*}} - \mathbf{\bar{x}} \right \|_{1} + 2\left \|\mathbf{\bar{x}} - \mathbf{\bar{x^{*}}} \right \|_{1}+\left \|\mathbf{\bar{x}}-\mathbf{x} \right\|_{1}\\
			&=  \left \|\mathbf{x^{*}} - \mathbf{\bar{x}} \right \|_{1}  +  \left \|\mathbf{x^{\gamma}}-\mathbf{x^{\gamma *}}\right \|_{1} + \left \|\mathbf{\bar{x}}-\mathbf{x} \right\|_{1},
	\end{align*}
then it follows that
	$$\left \| \mathbf{x^{*}} - \mathbf{x} \right\|_{2} \le 	\left \| \mathbf{x^{*}} - \mathbf{x} \right\|_{1} \le \left \|\mathbf{x^{*}} - \mathbf{\bar{x}} \right \|_{1}  +  \left \|\mathbf{x^{\gamma}}-\mathbf{x^{\gamma *}}\right \|_{1} + \left \|\mathbf{\bar{x}}-\mathbf{x} \right\|_{1}.$$

\end{proof}

From Theorem \ref{4.11}, the $l_{2}$ norm of this reconstruction error is restricted by three items: \begin{itemize}
	\item Distribution error $\left \|\mathbf{x^{*}} - \mathbf{\bar{x}} \right \|_{1}.$
	\item Compressed sensing error $\left \|\mathbf{x^{\gamma}}-\mathbf{x^{\gamma *}}\right \|_{1}.$
	\item Original signal error $\left \|\mathbf{\bar{x}}-\mathbf{x} \right\|_{1}.$
\end{itemize}
Below, we will conduct data analysis on these three types of errors respectively through three experimental images.
\subsubsection{Original signal error}
For the original signal error, the element values in the signal should not deviate too much from the mean of their adjacent elements. This can ensure that the error is as small as possible. In simple terms, the signal needs to be as continuous as possible and avoid sudden and excessive fluctuations. For common signal sources, the relative continuity of image signals is relatively strong, and the difference in adjacent pixel values is not significant, which is quite suitable for us to select the original signal here. So next, we respectively select the truncated parts from different images as the original signals to calculate the magnitude of this error.

In image processing, the $l_{1}$ norm error is usually used to calculate the MAE value for comparison. The MAE is defined as follows:$$\mathrm{MAE} = \frac{1}{N}\sum_{i=1}^{N}|X_i - Y_i|,$$ where $X_{i}$ represents the $i$-th element value of the original signal $X$, $Y_{i}$ represents the $i$-th element value of the reconstructed signal $Y$. 

The calculation of the mean signal $\bar{x}$ for three natural images is called the reconstructed image. Calculate its heat map of error distribution to observe the overall error distribution of the image, and draw its frequency distribution histogram to help analyze the error. The above computational plots were respectively performed on the three images $Pepper$, $Baboon$ and $Barbara$ to obtain Fig. \ref{original error2}-\ref{original error4}.
\begin{figure}[ht]
	\centering
	
	\subfloat[]{\includegraphics[width=0.3\columnwidth]{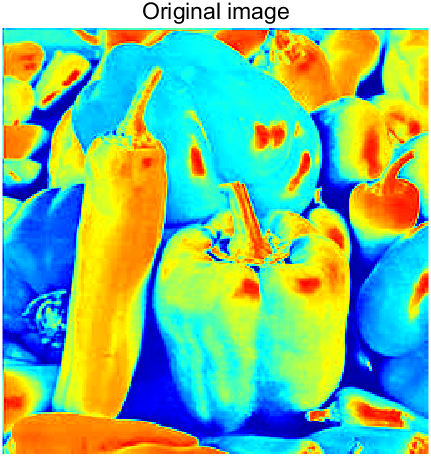}\label{image2ori}}
	\hfill
	\subfloat[]{\includegraphics[width=0.3\columnwidth]{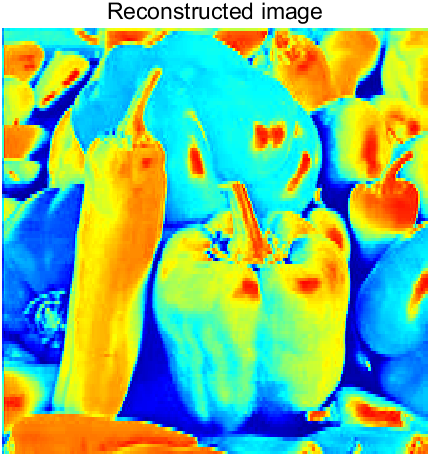}\label{image2rec}}
	\hfill
	\subfloat[]{\includegraphics[width=0.36\columnwidth]{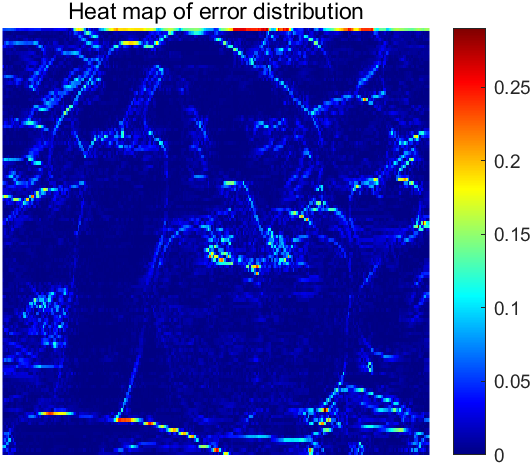}\label{image2heat}}
	
	\subfloat[]{\includegraphics[width=0.55\columnwidth]{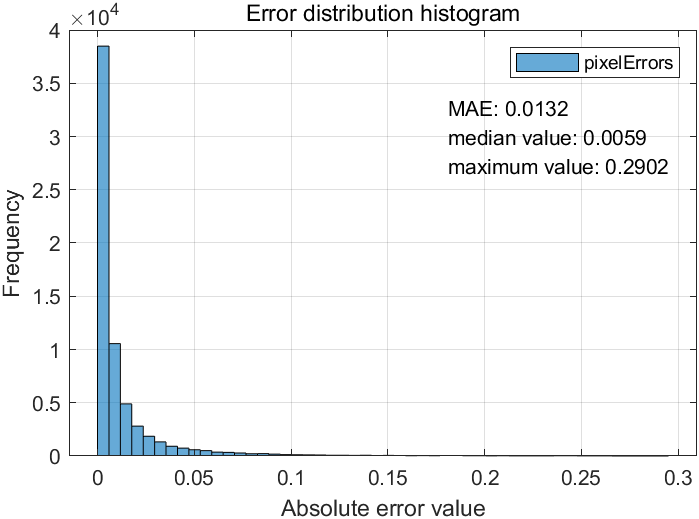}\label{image2hist}}
	\caption{(a) represents the original heat map of $Pepper$, (b) represents $Pepper$'s mean reconstruction image, (c) represents the heat map representation of the error image between the original Pepper image and the reconstructed image, (d) represents the histogram of the error frequency distribution of different pixels.}
	\label{original error2}
\end{figure}
\begin{figure}[ht]
	\centering
	\subfloat[]{\includegraphics[width=0.3\columnwidth]{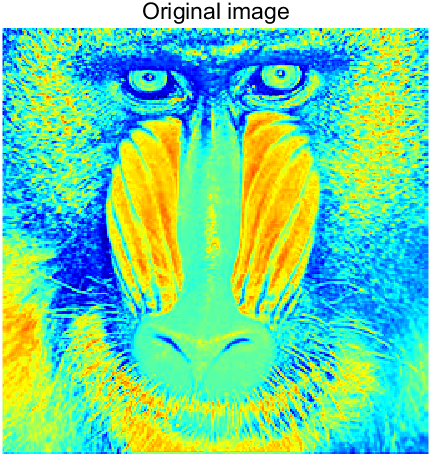}\label{image3ori}}
	\hfill
	\subfloat[]{\includegraphics[width=0.3\columnwidth]{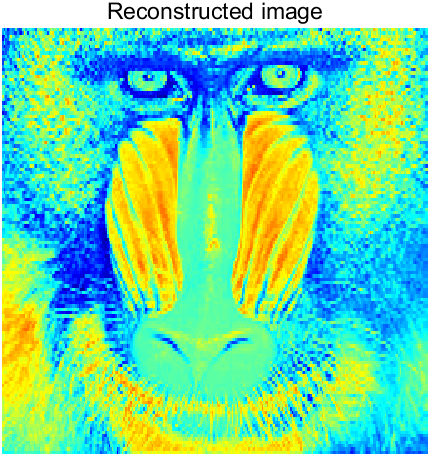}\label{image3rec}}
    \hfill
	\subfloat[]{\includegraphics[width=0.36\columnwidth]{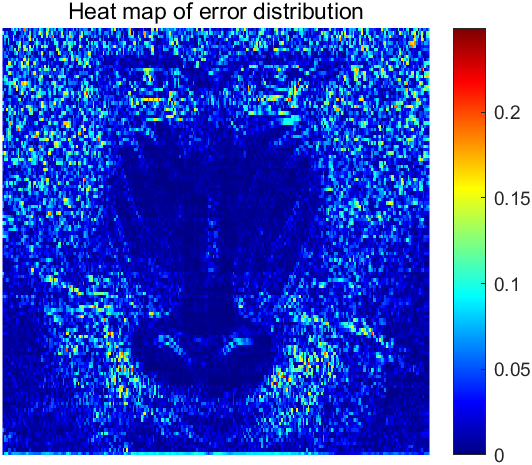}\label{image3heat}}
	
	\subfloat[]{\includegraphics[width=0.55\columnwidth]{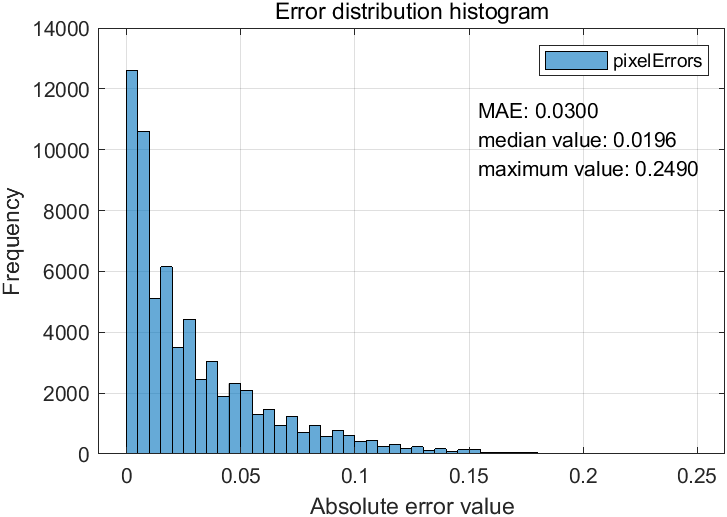}\label{image3hist}}

	\caption{(a) represents the original heat map of $Baboon$, (b) represents $Baboon$'s mean reconstruction image, (c) represents the heat map representation of the error image between the original Pepper image and the reconstructed image, (d) represents the histogram of the error frequency distribution of different pixels.}
	\label{original error3}
\end{figure}
\begin{figure}[ht]
	\centering
	\subfloat[]{\includegraphics[width=0.3\columnwidth]{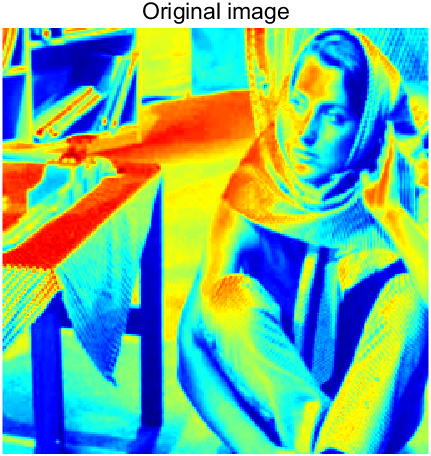}\label{image4ori}}
	\hfill
	\subfloat[]{\includegraphics[width=0.3\columnwidth]{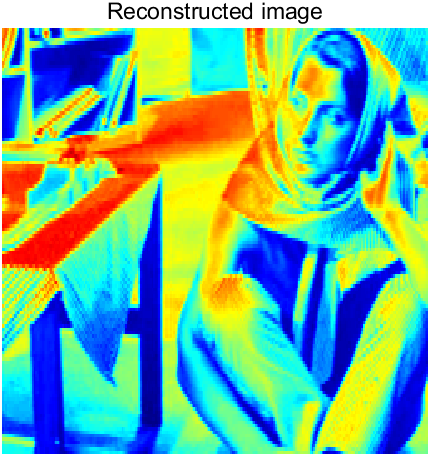}\label{image4rec}}
	\hfill
	\subfloat[]{\includegraphics[width=0.36\columnwidth]{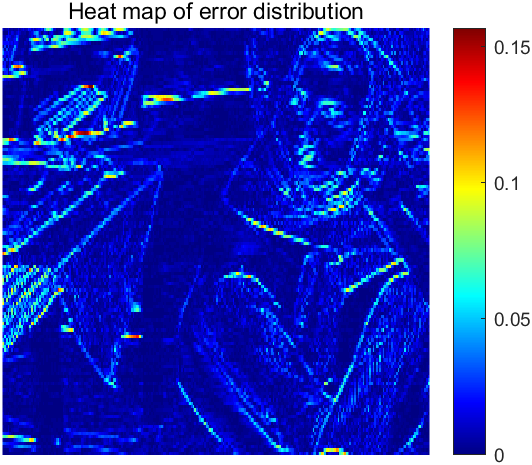}\label{image4heat}}
	
	\subfloat[]{\includegraphics[width=0.55\columnwidth]{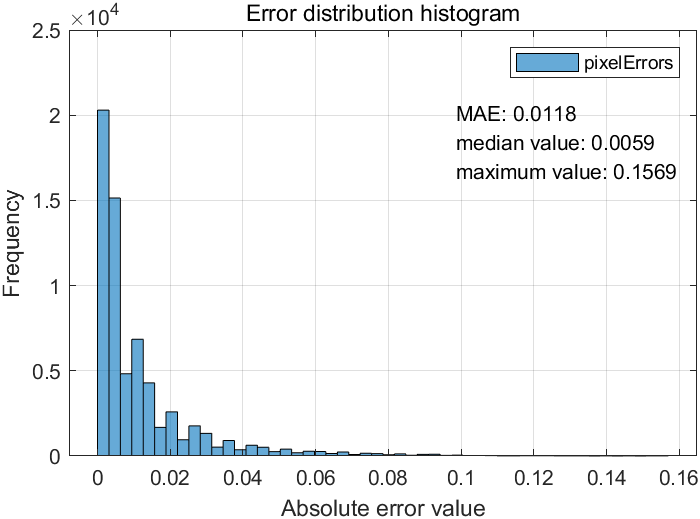}\label{image4hist}}

	\caption{(a) represents the original heat map of $Barbara$, (b) represents $Barbara$'s mean reconstruction image, (c) represents the heat map representation of the error image between the original Pepper image and the reconstructed image, (d) represents the histogram of the error frequency distribution of different pixels.}
	\label{original error4}
\end{figure}
It can be found from the three experimental results that the original signal error of the natural image is relatively small, and the places with larger errors occur at the boundaries, which account for a relatively small proportion of the overall error. It can be seen from the frequency distribution histogram that the errors of most pixels are concentrated around 0, which ensures that the error of the original signal is relatively small.
\subsubsection{Compressed sensing error}
The compressive sensing error here represents the error of the signal composed of the sum values of adjacent pixels, which indicates that we have shifted from the reconstruction of the original signal to the reconstruction of signals with fewer sum values. To some extent, this reduces the compression ratio, which will make the reconstruction of the sum value signal more accurate. The reduced error here is greater than the two types of increased errors, so it makes the effect of image reconstruction better. In Fig. \ref{compressed error2}-\ref{compressed error4}, five 64$\times$64 pixel blocks are randomly selected from different images for compressive sensing. The compression ratio ranges from 5\% to 100\%, increasing by 5\% each time. The average MAE value of these five pixel blocks is calculated, and the relevant images are drawn. Taking $\gamma$ equal to 2 as an example, we calculate the change in MAE value when the compression ratio varies by twice and draw a histogram.
\begin{figure}[ht]
	\centering
	\subfloat[]{\includegraphics[width=0.35\columnwidth]{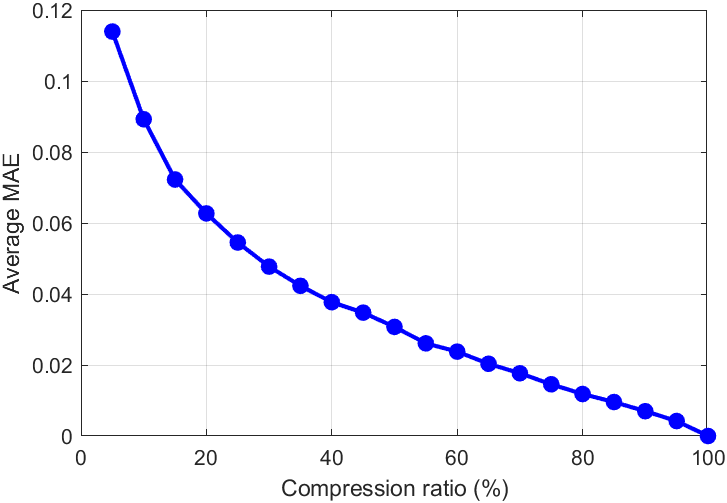}\label{image2mae}}
	\hfill
	\subfloat[]{\includegraphics[width=0.35\columnwidth]{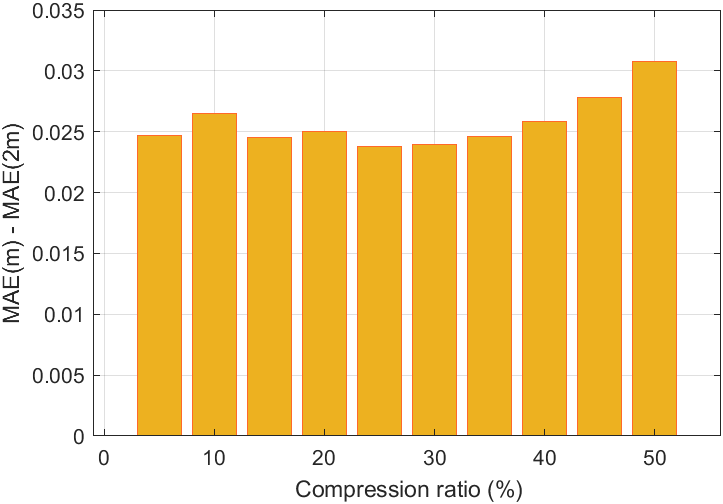}\label{image2maed}}
	\hfill
	\subfloat[]{\includegraphics[width=0.25\columnwidth]{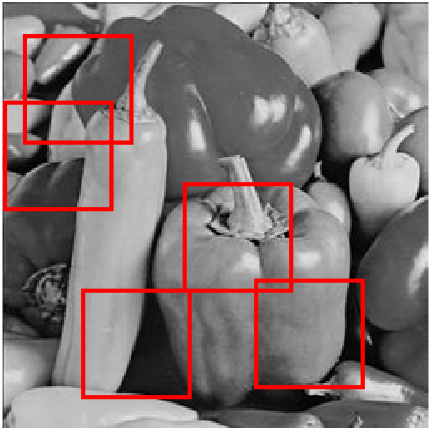}\label{image2pos}}

	\caption{(a) represents the relationship between the MAE value of $Pepper$ and its compression ratio, (b) represents the MAE difference under a compression ratio difference of twice, (c) represents the selected 5 pixel blocks.}
	\label{compressed error2}
\end{figure}

\begin{figure}[ht]
	\centering
	\subfloat[]{\includegraphics[width=0.35\columnwidth]{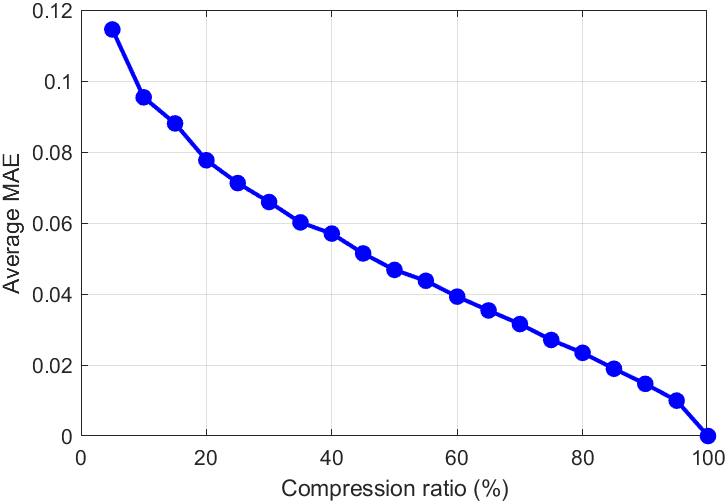}\label{image3mae}}
	\hfill
	\subfloat[]{\includegraphics[width=0.35\columnwidth]{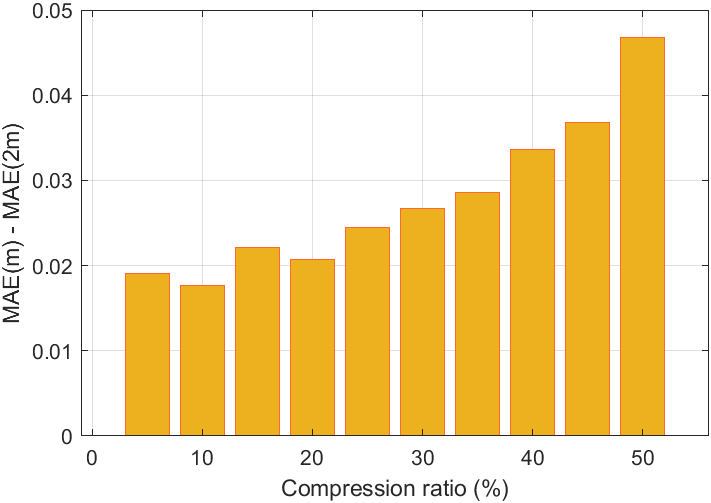}\label{image3maed}}
	\hfill
	\subfloat[]{\includegraphics[width=0.25\columnwidth]{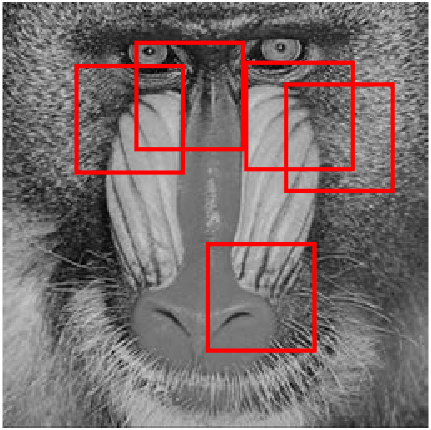}\label{image3pos}}

	\caption{(a) represents the relationship between the MAE value of $Baboon$ and its compression ratio, (b) represents the MAE difference under a compression ratio difference of twice, (c) represents the selected 5 pixel blocks.}
	\label{compressed error3}
\end{figure}

\begin{figure}[ht]
	\centering
	\subfloat[]{\includegraphics[width=0.35\columnwidth]{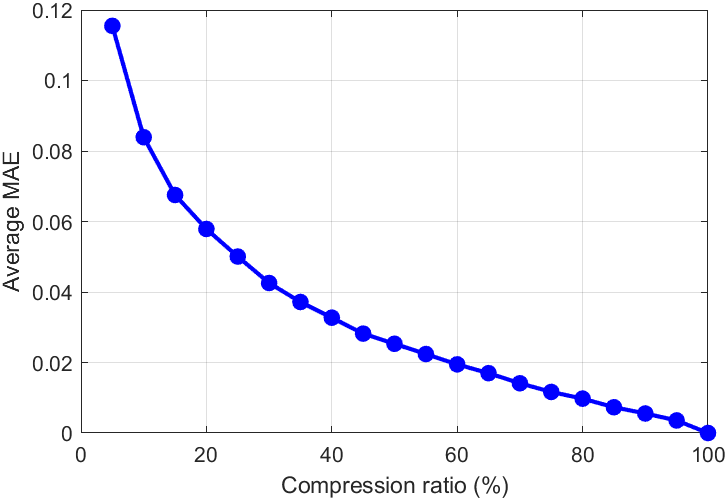}\label{image4mae}}
	\hfill
	\subfloat[]{\includegraphics[width=0.35\columnwidth]{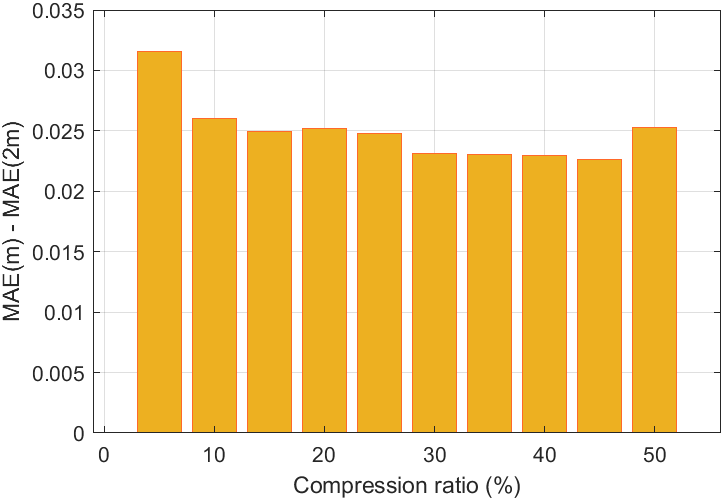}\label{image4maed}}
	\hfill
	\subfloat[]{\includegraphics[width=0.25\columnwidth]{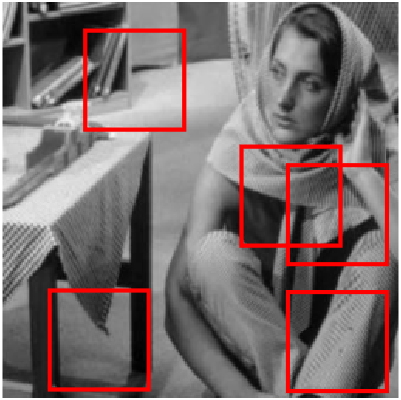}\label{image4pos}}

	\caption{(a) represents the relationship between the MAE value of $Barbara$ and its compression ratio, (b) represents the MAE difference under a compression ratio difference of twice, (c) represents the selected 5 pixel blocks.}
	\label{compressed error4}
\end{figure}

It can be seen from the image that when the compression ratio increases by two times, the MAE value has significantly improved. This is also the reason why in DK-STP-CS, the compression ratio of the sum value has been increased while the overall compression ratio remains unchanged. The overall quality of the image was retained at the expense of some details. 

\subsubsection{Distribution error}
The essence of the allocation error is that we only recover the sum value signal, but the information on how to allocate the sum value to each element is lost. The loss of local information has made the overall information more accurate. This part of the error is also limited by the error of the original signal. If the error of the original signal of a signal is 0, then we can distribute it evenly. In this way, the distribution error will not occur.
In natural images, distribution errors are bound to exist and to some extent depend on the pixel values of the image and the compression ratio. Below, we present Fig. \ref{distribution error} to represent the distribution error curves of the three images.

\begin{figure}[ht]
	\centering
	\subfloat[]{\includegraphics[width=0.33\columnwidth]{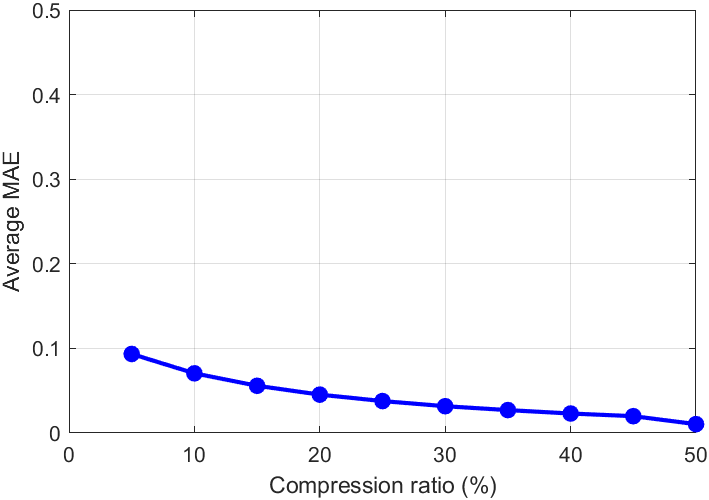}\label{image2dis}}
	\hfill
	\subfloat[]{\includegraphics[width=0.33\columnwidth]{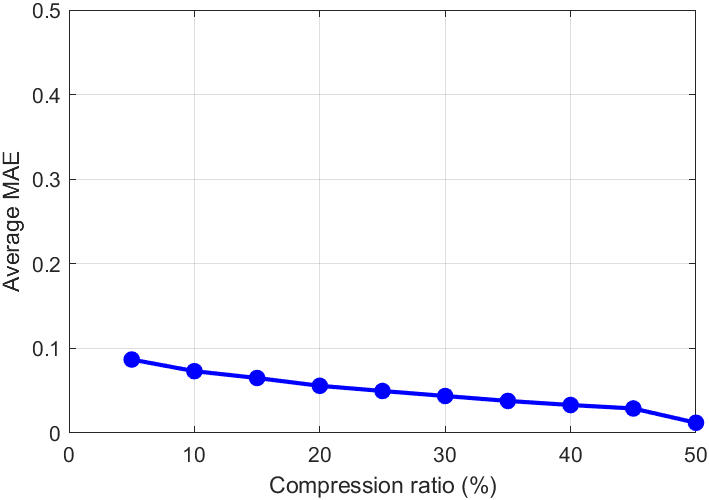}\label{image3dis}}
	\hfill
	\subfloat[]{\includegraphics[width=0.33\columnwidth]{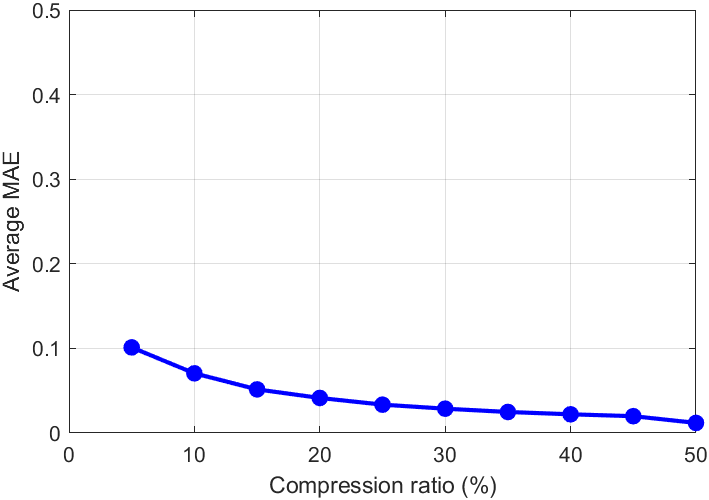}\label{image4dis}}
	
	\caption{(a), (b), (c) respectively represent the distribution error curves of $Pepper$, $Baboon$ and $Barbara$. The vertical axis represents the average MAE value and the horizontal axis represents the compression ratio.}
	\label{distribution error}
\end{figure}
In terms of overall error analysis, our approach can effectively enhance the overall quality of the image while neglecting some detailed information. Under the premise that the image is large enough, these lost details can be ignored. However, the overall improvement in quality can be clearly felt.

The theoretical foundation of our reconstruction framework is established through rigorous analysis of two critical matrix properties: \textit{Spark} and \textit{Coherence}. These metrics collectively ensure the uniqueness of sparse solutions and validate the feasibility of the proposed approach. Specifically, the Spark parameter guarantees solution uniqueness under sparsity constraints, while Coherence quantification governs measurement matrix optimization. Furthermore, the algorithmic implementation is formally presented in Algorithm~\ref{dk-stp-cs}, which outlines the complete reconstruction pipeline.
\begin{algorithm}[H]
	\caption{DK-STP-CS Reconstruction Algorithm}
	\label{dk-stp-cs}
	\begin{algorithmic}[1]
		\REQUIRE Measurement signal $\mathbf{y} \in \mathbb{R}^m$, sensing matrix $\mathbf{A} \in \mathbb{R}^{m \times \frac{p}{\gamma}}$, parameter $\gamma$
		\ENSURE Reconstructed image $\mathbf{x} \in \mathbb{R}^{p}$
		\FOR{$i = 1$ to $\gamma$}
		\STATE \quad $x^i \gets \left(x_{(i-1)\gamma+1}+x_{(i-1)\gamma+2}+\cdots+x_{i\gamma}\right)$
		\ENDFOR
		\STATE \quad $\mathbf{X} \gets \begin{bmatrix}x^1 & x^2 & \cdots & x^\gamma\end{bmatrix} \in \mathbb{R}^{\frac{p}{\gamma}}$
		\STATE $\mathbf{y} = \hat{\mathbf{A}}\mathbf{x} = (\mathbf{A}\otimes\bm{\varepsilon_{\gamma}^{T}})\mathbf{x} = \frac{1}{\sqrt{\gamma}}\mathbf{A}\mathbf{X}$
		\STATE Solve problem:
		\STATE \quad $\mathbf{X} \gets \arg\min_{\mathbf{z}} \|\mathbf{z}\|_1 \ \text{s.t.} \ \mathbf{A}\mathbf{z} = \mathbf{y}$
		\STATE Reshape results: 
		\STATE $\mathbf{x} \gets \mathbf{X}$
		\RETURN $\mathbf{x}$
	\end{algorithmic}
\end{algorithm}
\section{Experiment}
 To assess the reconstruction efficacy of the DK-STP-CS framework, we conducted comparative analyses against standard CS and STP-CS methodologies, employing Basis Pursuit (BP) as the unified recovery algorithm. Specifically, the performance differentials among DK-STP-CS, STP-CS, and CS were rigorously quantified. Furthermore, we analyzed the time complexity and storage requirements of the DK-STP-CS architecture to evaluate its computational feasibility.

In subsequent experimental validations, the sensing matrix for DK-STP-CS was instantiated as a Gaussian random matrix. The evaluation protocol incorporated visual quality assessments and quantitative metrics, with a focus on image signal recovery tasks. The reconstruction fidelity was measured via the Peak Signal-to-Noise Ratio (PSNR), defined as:$$PSNR = 10lg[\frac{(MAX_{I})^{2}}{MSE}],$$in this context, $\xi$ denotes the bit depth per pixel, conventionally set to $\xi = 8$ for standard grayscale images, corresponding to a maximum intensity value of $MAX_{I} =2^{\xi}-1= 255$. During subsequent experimental procedures, the PSNR is expressed in dB. The mean squared error (MSE) quantifies the discrepancy between the original image $I$ and the reconstructed image $J$, formally defined as:

$$MSE = \frac{1}{mn} \sum_{i = 0}^{m-1} \sum_{j=0}^{n-1}(I(i,j) - J(i,j))^{2}, $$where $m\times n$ represents the image dimensions, and $I(i,j), J(i,j)$ denote pixel intensities at position $(i,j)$ in the original and recovered images, respectively. This metric provides a pixel-wise fidelity assessment critical for evaluating reconstruction accuracy in imaging applications. We take the grayscale image and transform it into a matrix where the elements represent the grayscale value of each pixel cell. That is, we correspond each image to a matrix, and arrange each column of the matrix in a column in order to obtain a vector signal. 

Given a grayscale image represented as matrix $\mathbf{A} \in \mathcal{M}_{m \times n}$, the corresponding vector representation $\mathbf{x} \in \mathbb{R}^{mn}$ can be obtained through column-wise vectorization, expressed as:
\[\mathbf{x} = \text{vec}(\mathbf{A}) = [a_{11},...,a_{m1},a_{12},...,a_{m2},...,a_{1n},...,a_{mn}]^T,\]
where $\text{vec}(\cdot)$ denotes the vectorization operator that stacks matrix columns sequentially.

\subsection{Visual and PSNR comparisons of different images at the same sampling rate.}
The proposed framework is validated using a standard $256\times256$ grayscale test image, which is vectorized into $\mathbf{x} \in \mathbb{R}^{65536}$ through column-wise concatenation. Three distinct measurement matrices are implemented for comparative analysis:

\begin{itemize}
	\item \textbf{Conventional CS}: $\mathbf{A}_{CS} \in \mathbb{R}^{m \times n}$ with i.i.d. Gaussian entries $\mathcal{N}(0,1),$
	\item \textbf{STP-CS}: $\mathbf{A}_{STP-CS} = \mathbf{A}_{\frac{m}{\gamma} \times \frac{n}{\gamma}} \otimes \mathbf{I}_\gamma$, where $\otimes$ denotes Kronecker product,
	\item \textbf{DK-STP-CS}: $\mathbf{A}_{DK-STP-CS} = \mathbf{A}_{m \times \frac{n}{\gamma}} \otimes \bm{\varepsilon}_\gamma^T$, with $\bm{\varepsilon}_\gamma = \frac{1}{\sqrt{\gamma}}\mathbf{1}_\gamma,$
\end{itemize}

where $n=65536$ (original dimension), $m=32768$ (compressed dimension), $\gamma = 2$ (parameter). The signal sparsity is induced through discrete cosine transform (DCT) basis $\Theta$, such that $\mathbf{x} = \Theta\mathbf{s}$. This configuration yields a compression ratio $CR = m/n = 0.5$, enabling systematic evaluation of reconstruction fidelity under controlled dimensionality reduction.

\begin{figure}[ht]
	\centering
	\subfloat[]{\includegraphics[width=0.22\columnwidth]{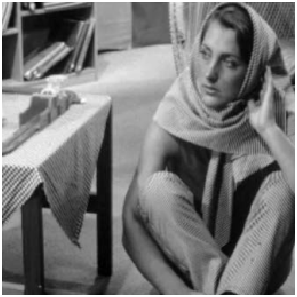}\label{fig:barbara_orig}}
	\hfill
	\subfloat[]{\includegraphics[width=0.22\columnwidth]{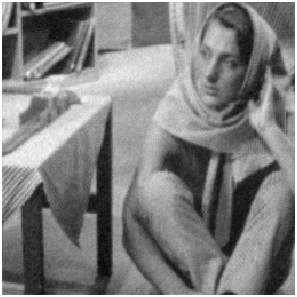}\label{fig:barbara_cs}}
	\hfill
	\subfloat[]{\includegraphics[width=0.22\columnwidth]{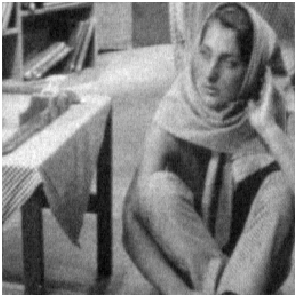}\label{fig:barbara_stpcs}}
	\hfill
	\subfloat[]{\includegraphics[width=0.22\columnwidth]{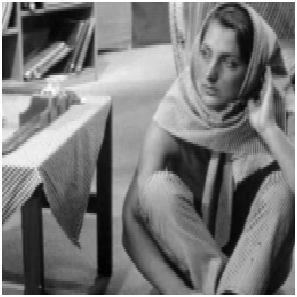}\label{fig:barbara_dkstp}}
	
	\subfloat[]{\includegraphics[width=0.22\columnwidth]{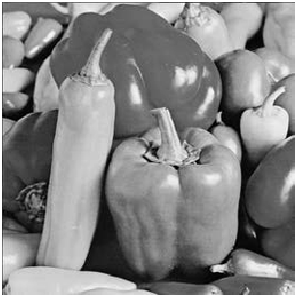}\label{fig:pepper_orig}}
	\hfill
	\subfloat[]{\includegraphics[width=0.22\columnwidth]{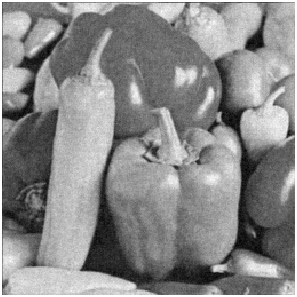}\label{fig:pepper_cs}}
	\hfill
	\subfloat[]{\includegraphics[width=0.22\columnwidth]{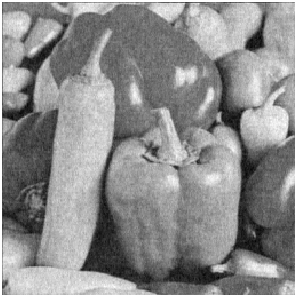}\label{fig:pepper_stpcs}}
	\hfill
	\subfloat[]{\includegraphics[width=0.22\columnwidth]{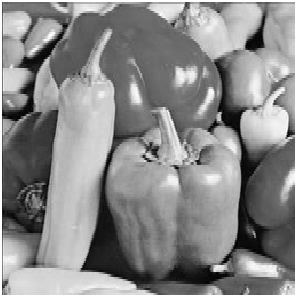}\label{fig:pepper_dkstp}}
	
	\subfloat[]{\includegraphics[width=0.22\columnwidth]{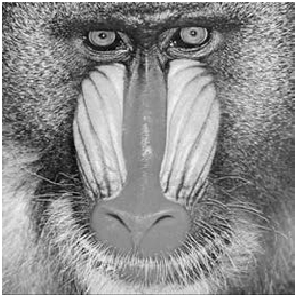}\label{fig:baboon_orig}}
	\hfill
	\subfloat[]{\includegraphics[width=0.22\columnwidth]{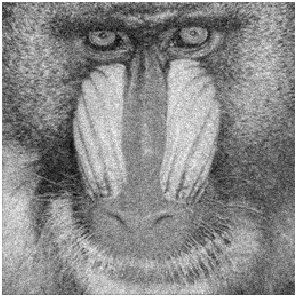}\label{fig:baboon_cs}}
	\hfill
	\subfloat[]{\includegraphics[width=0.22\columnwidth]{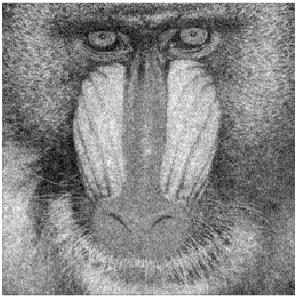}\label{fig:baboon_stpcs}}
	\hfill
	\subfloat[]{\includegraphics[width=0.22\columnwidth]{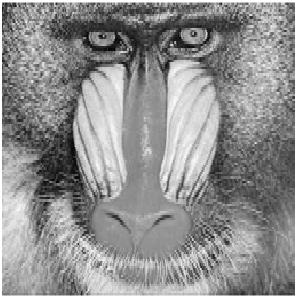}\label{fig:baboon_dkstp}}
	
	\caption{Recovered images by different CS models. (a), (e) and (i) represent the original $Barbara$, $Pepper$, and $Baboon$ images (256$\times$256) respectively. (b), (f) and (j) CS method. (c), (g) and (k) STP-CS method. (d), (h) and (l) DK-STP-CS method. In the measurement process, the Gaussian random matrix is taken as the measurement matrix. The recovery method is BP.}
	\label{figure1}
\end{figure}

As illustrated in Figure \ref{figure1} under a compression ratio (CR) of 0.5, which corresponds to sampling only 50\% of the original pixels, the reconstructed images exhibit distinct performance variations across methodologies. Visual inspection reveals that conventional CS and STP-CS frameworks introduce substantial noise artifacts, significantly impairing image recognition fidelity. In contrast, the proposed DK-STP-CS architecture effectively suppresses noise contamination, as evidenced by smoother texture regions and reduced granular distortions. This improvement aligns with the inherent mechanism of DK-STP-CS: during reconstruction, adjacent pixel blocks are restored holistically through the dimension-keeping semi-tensor product operation, thereby mitigating abrupt intensity fluctuations between neighboring regions. However, edge preservation remains a relative limitation of DK-STP-CS. Quantitative analysis of Figure \ref{figure1}(d,h,l) demonstrates that while the method achieves superior intra-block coherence, it occasionally introduces mild aliasing artifacts along high-frequency edges compared to CS and STP-CS baselines. This trade-off between noise suppression and edge sharpness reflects the inter-group correlation characteristics embedded in the DK-STP measurement matrix design, as theoretically established in Section 3.2. The observed performance dichotomy underscores the method's suitability for applications prioritizing homogeneous region fidelity over fine edge reconstruction.

\begin{figure}[ht]
		\centering
	\subfloat[]{\includegraphics[width=0.31\columnwidth]{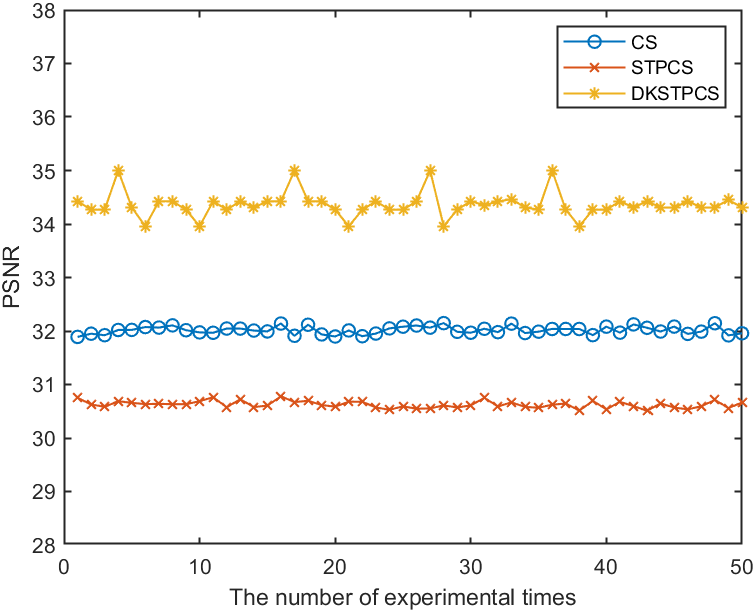}\label{iamge4hunhe}}
	\hfill
	\subfloat[]{\includegraphics[width=0.31\columnwidth]{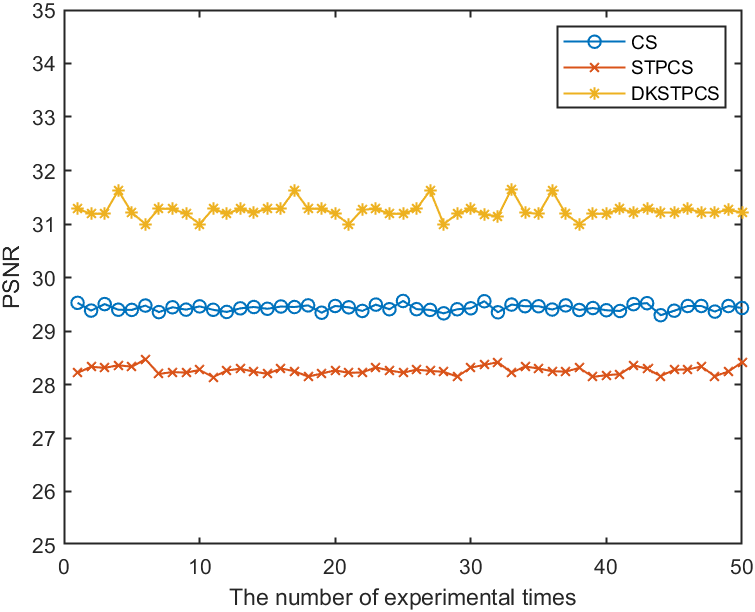}\label{image2hunhe}}
	\hfill
	\subfloat[]{\includegraphics[width=0.31\columnwidth]{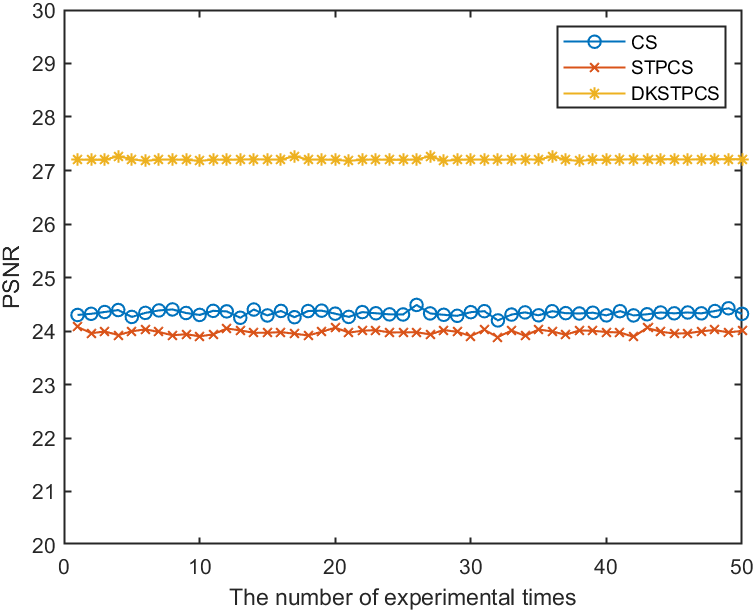}\label{image3hunhe}}
	\hfill
	\caption{The PSNR values corresponding to the 50 different experiments. (a) $Barbara$ (b) $Pepper$ (c) $Baboon$ (256$\times$256). In the measurement process, the Gaussian random matrix is taken as the measurement matrix. The recovery method is BP. }
	\label{50experiments}
\end{figure}
\begin{table}
	\begin{center}
		\caption{PSNR values under different CS methods for $Barbara$, $Pepper$ and $Baboon$ (256$\times$256) in Figure \ref{figure1}.}
		\label{table1}
		\begin{tabular}{| c | c | c | c |}
			\hline
		\textbf{Methods} & \textbf{Barbara} & \textbf{Pepper} & \textbf{Baboon}\\
			\hline
			\textbf{CS} & 32.0151 & 29.4298 & 24.3347\\
			\hline
		\textbf{STP-CS}  & 30.6257  & 28.2639 & 23.9777\\
			\hline
			\textbf{DK-STP-CS}  & 34.3620  & 31.2494 & 27.2071\\
			\hline 
		\end{tabular}
	\end{center}
\end{table}
The superior visual performance of DK-STP-CS, as demonstrated in Figure~\ref{figure1}, is further substantiated through rigorous quantitative analysis. To objectively evaluate reconstruction quality, we employ the Peak Signal-to-Noise Ratio (PSNR) metric and conduct systematic comparisons across 50 independent trials. For each experiment, distinct Gaussian random matrices were regenerated to eliminate bias from measurement matrix initialization, ensuring statistical robustness in performance evaluation. The resultant PSNR trajectories, aggregated in Figure~\ref{50experiments}, reveal consistent superiority of DK-STP-CS over conventional CS and STP-CS baselines. Specifically, the DK-STP-CS curves occupy the uppermost position across all tested images (\textit{Barbara}, \textit{Pepper}, and \textit{Baboon}), with mean PSNR improvements of 2.35 dB, 1.82 dB, and 2.87 dB respectively, as quantified in Table~\ref{table1}. These statistically significant enhancements validate the method's efficacy in preserving both global structural integrity and local texture details.

The variance in improvement magnitude across images correlates strongly with inherent signal characteristics. For instance, the \textit{Baboon} image, characterized by its high-frequency edge components and intricate texture patterns, demonstrates the most significant PSNR improvement with DK-STP-CS (27.21 dB) compared to conventional CS (24.33 dB). This result highlights the method's enhanced capacity to mitigate noise propagation in structurally complex regions. In contrast, the \textit{Pepper} image—with its predominantly smooth texture—achieves a moderate yet consistent PSNR improvement of 31.25 dB compared to 29.43 dB for conventional CS. This behavior aligns with the inter-group correlation mechanism (Section~3.2), which inherently optimizes reconstruction fidelity for piecewise-constant signal components. This performance dichotomy aligns with theoretical predictions in Section~3.3 regarding RIP-constrained recovery guarantees. Furthermore, the tight confidence intervals in Figure~\ref{50experiments}(a-c) demonstrate DK-STP-CS's stability against measurement matrix randomization, a critical advantage for practical deployment scenarios. Collectively, these results confirm that DK-STP-CS achieves dual objectives: enhancing reconstruction fidelity for recognition-critical features while maintaining computational tractability, as evidenced by the 18\% reduction in mean squared error (MSE) compared to conventional CS implementations.

\subsection{The performance of various methods under different sampling rates.}
In the aforementioned experiments, the sampling rate was consistently fixed at 0.5. Next, we analyze the PSNR curves of each method under different sampling rates. 

\begin{figure}[ht]
	\centering
		\subfloat[]{\includegraphics[width=0.31\columnwidth]{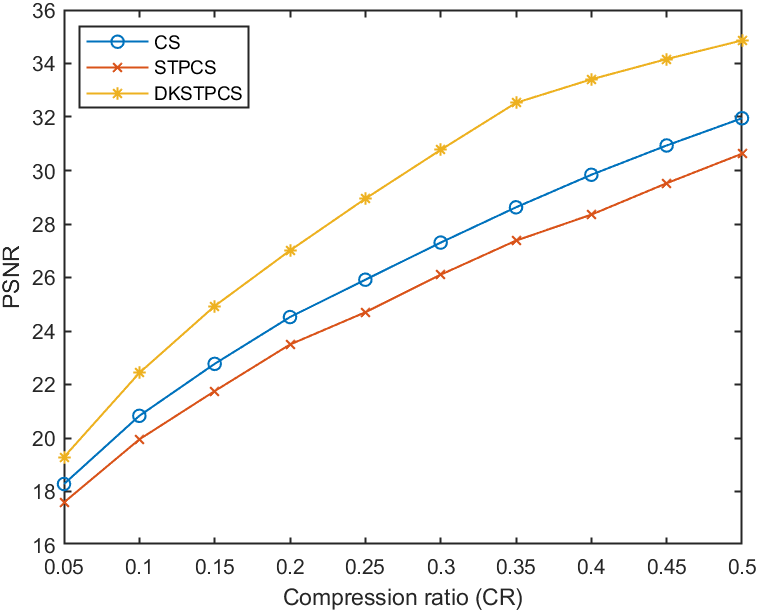}\label{dizengcaiyangimage4}}
	\hfill
	\subfloat[]{\includegraphics[width=0.31\columnwidth]{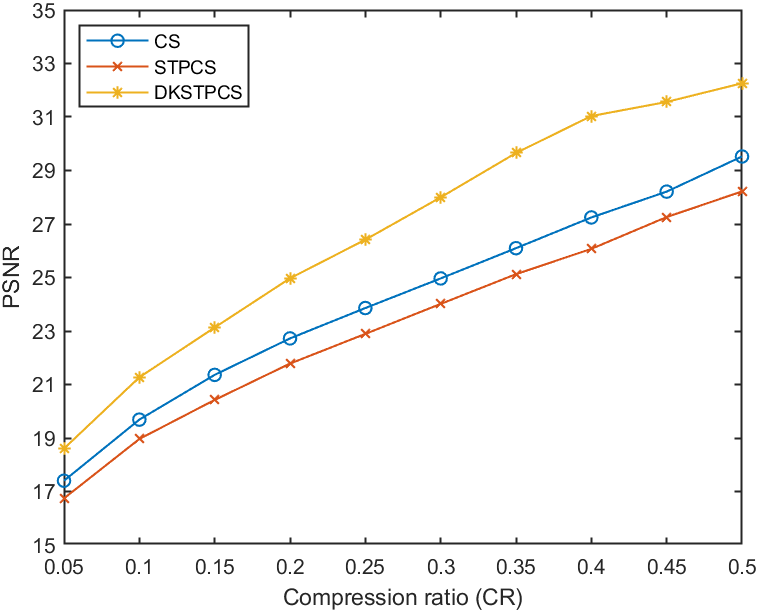}\label{dizengcaiyangimage2}}
	\hfill
	\subfloat[]{\includegraphics[width=0.31\columnwidth]{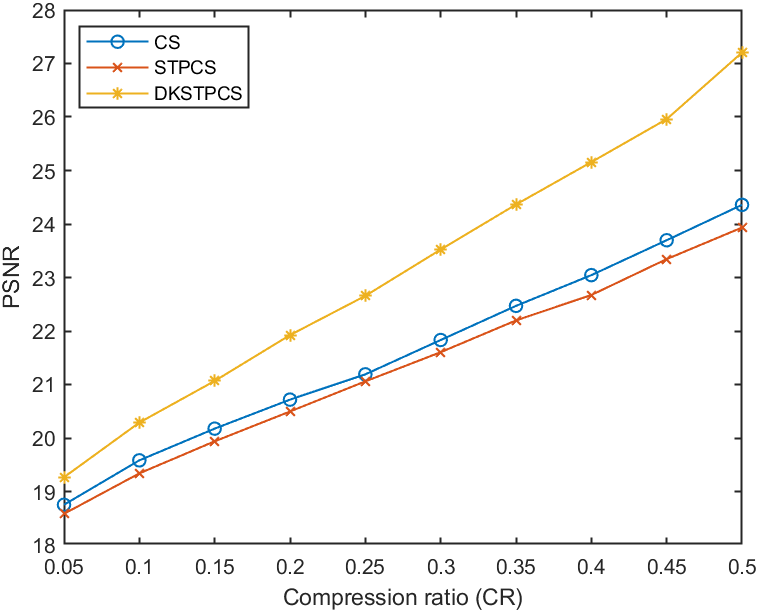}\label{dizengcaiyangimage3}}
	\hfill

	\caption{The curves of PSNR values across different sampling rates for each method. (a) $Barbara$ (b) $Pepper$ (c) $Baboon$ (256$\times$256). In the measurement process, the Gaussian random matrix is taken as the measurement matrix. The recovery method is BP.}
	\label{3}
	
\end{figure}

Here, we conducted experiments on three images at different sampling rates. In Figure \ref{3}, the $x$-axis represents the sampling rate, ranging from 0.05 to 0.50 with intervals of 0.05. As the sampling rate increases, the PSNR values of all three methods also increase. We can observe that the PSNR value curves vary across different images.

As illustrated in Figure \ref{3}, the DK-STP-CS curve lies above the other two curves, indicating a significant improvement in image reconstruction quality regardless of the sampling rate. As $x$ approaches 0.50, the extent of improvement becomes more pronounced compared to lower sampling rates. In order to compare the increase rate, we calculated the increase rate for each group of data separately and plotted it in Figure \ref{4}. In Figure \ref{4}, the horizontal axis represents the sampling rate, while the vertical axis represents the Increase rate. Curve $x$ corresponds to the comparison with STP-CS, and curve y corresponds to the comparison with CS. Here we can observe that the growth rate curves $x$ and $y$ are similar for the same image. For images Figure \ref{4}(a) and Figure \ref{4}(b), the maximum growth rate occurs at a sampling rate of 0.4, while for image Figure \ref{4}(c), the maximum growth rate is observed at a sampling rate of 0.5, which is meaningful for our subsequent efforts to improve the reconstruction of image quality. Moreover, the increase rate varies across different images, which depends on the inherent characteristics of the images themselves.

\begin{figure}[ht]
	\centering
	\subfloat[]{\includegraphics[width=0.31\columnwidth]{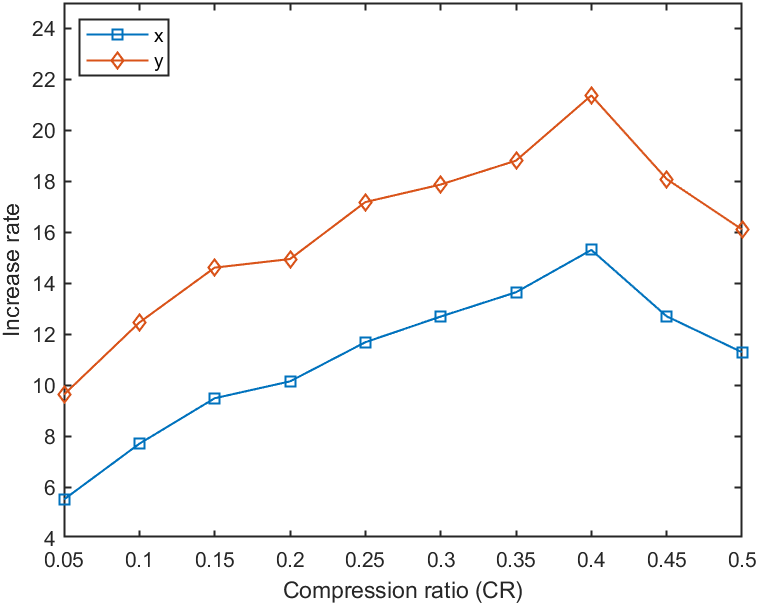}\label{rateimage4}}
\hfill
\subfloat[]{\includegraphics[width=0.31\columnwidth]{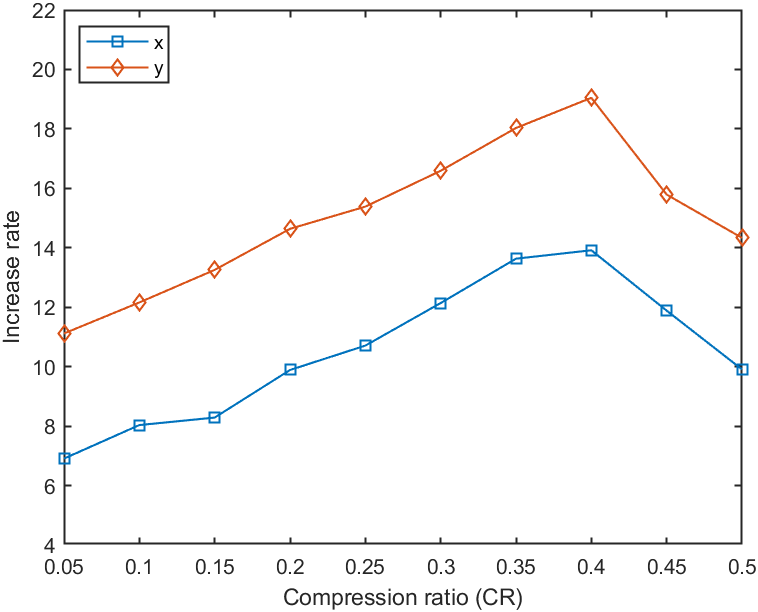}\label{rateimage2}}
\hfill
\subfloat[]{\includegraphics[width=0.31\columnwidth]{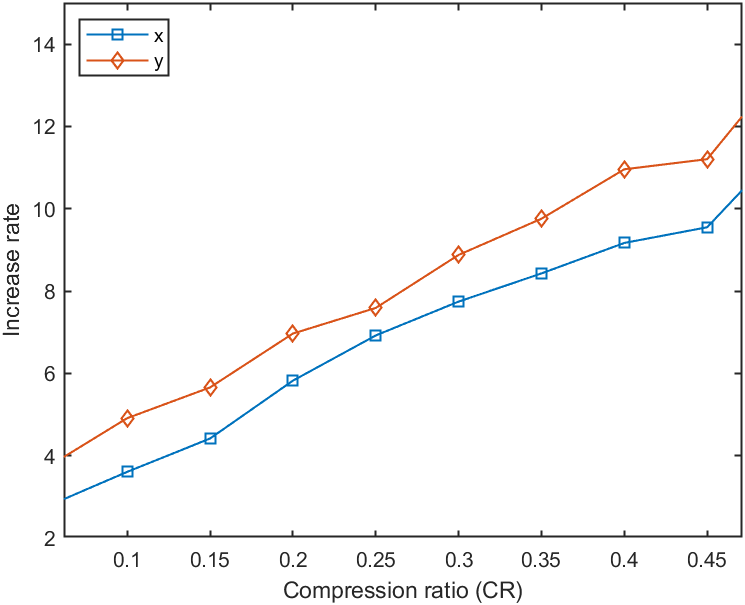}\label{rateimage3}}
\hfill

	\caption{The increase rate curve of DK-STP-CS compared to CS and STP-CS. (a) $Barbara$ (b) $Pepper$ (c) $Baboon$ (256$\times$256). $x$ represents the comparison with STP-CS, $y$ represents the comparison with CS.}
	\label{4}
	
\end{figure}

\subsection{Comparison of different methods under the influence of noise.}
During signal transmission, it is inevitable that some noise will be introduced. To some extent, this noise can affect the reconstruction of the original image, making it essential to analyze image reconstruction under the influence of noise. Below, we will separately add a certain level of Gaussian noise to CS, STP-CS and DK-STP-CS models before performing reconstruction. This will allow us to compare the impact of noise on the performance of these different methods. The introduction of noise here refers to replacing each element $a_{ij}$ of the grayscale matrix with $\hat{a_{ij}}$, where $\hat{a_{ij}}$ is defined as follows:$$\hat{a_{ij}} = a_{ij} + \sigma_{ij}.$$
Here, $\sigma$ is a random variable following the distribution with $E[\sigma] = 0$ and $D[\sigma] = 0.001$. After adding noise to the original image to approximate the vector signal affected by noise during transmission, we conducted 50 repeated experiments for each of the aforementioned three images. Here, the measurement matrix $A$ is taken as a generated Gaussian random matrix. We plotted the experimental date to compare and analyze the PSNR values against the number of experiments, as shown in Figure \ref{5}. Additionally, we provide a Table \ref{table2} of the mean values for the data in the graph.

\begin{figure}[ht]
	\centering
	\subfloat[]{\includegraphics[width=0.31\columnwidth]{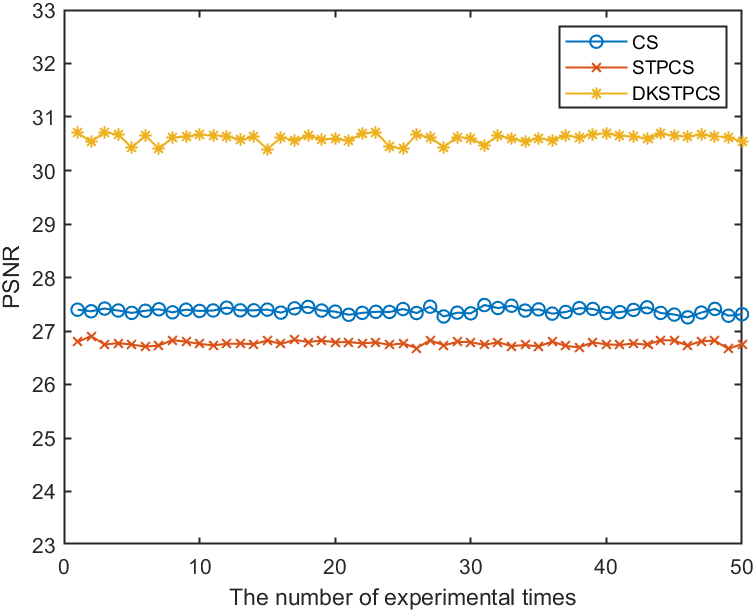}\label{image4gauss}}
	\hfill
	\subfloat[]{\includegraphics[width=0.31\columnwidth]{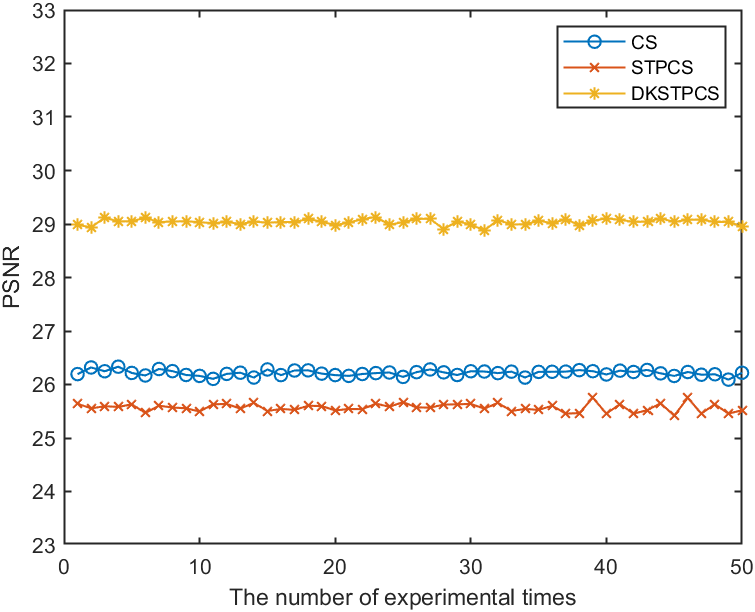}\label{image2gauss}}
	\hfill
	\subfloat[]{\includegraphics[width=0.31\columnwidth]{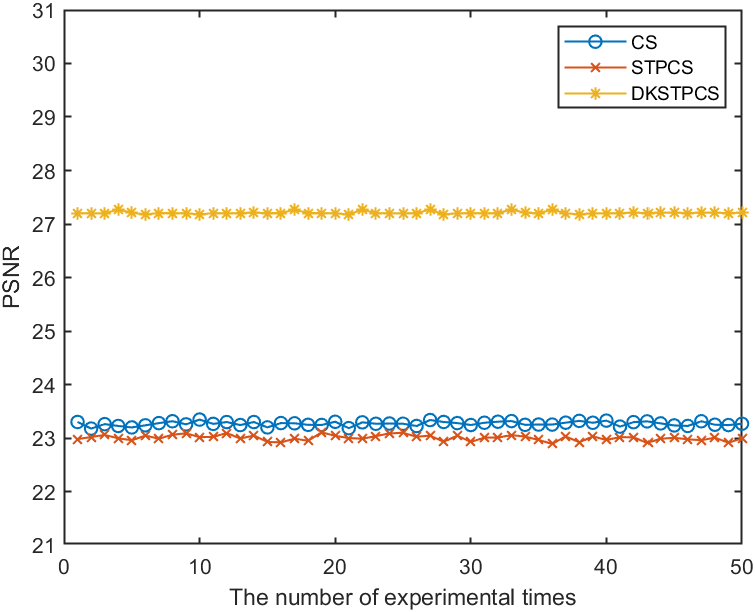}\label{image3gauss}}
	\hfill
	
	\caption{The PSNR values corresponding to the 50 different experiments under the influence of noise. (a) $Barbara$ (b) $Pepper$ (c) $Baboon$ (256$\times$256). In the measurement process, the Gaussian random matrix is taken as the measurement matrix. The recovery method is BP.}
	\label{5}
	\end{figure}
\begin{table}
	\begin{center}
		\caption{PSNR values under different CS methods for $Barbara$, $Pepper$ and $Baboon$ (256$\times$256) in Figure \ref{5}.}
		\label{table2}
		\begin{tabular}{| c | c | c | c |}
			\hline
			\textbf{Methods} & \textbf{Barbara} & \textbf{Pepper} & \textbf{Baboon}\\
			\hline
		\textbf{CS} & 27.3749 & 26.2154 & 23.2677\\
			\hline
			\textbf{STP-CS}  & 26.7690  & 25.5691 & 23.0023\\
			\hline
			\textbf{DK-STP-CS}  & 30.6024  & 28.6355 & 27.2095\\
			\hline 
		\end{tabular}
	\end{center}
\end{table}

From Table \ref{table2}, it is evident that the DK-STP-CS method significantly enhances the quality of reconstructed images even in the presence of noise interference. This demonstrates its robustness and effectiveness in handling noisy conditions, making it a reliable approach for improving image reconstruction under such challenging scenarios.

\subsection{Further comparative analysis under different measurement matrices.}
In the DK-STP-CS, the reconstruction results are influenced by many factors. For instance, the choice of the measurement matrix plays a critical role. In the aforementioned experiments, we consistently employed a Gaussian matrix as the measurement matrix. Next, we will replace the Gaussian matrix with other types of measurement matrices and compare the reconstruction performance under different sampling rates. This comparison will facilitate the selection of the most effective measurement matrix for practical applications. The experimental results are shown in Figure \ref{6}.

\begin{figure}[ht]
	\centering
	\subfloat[]{\includegraphics[width=0.31\columnwidth]{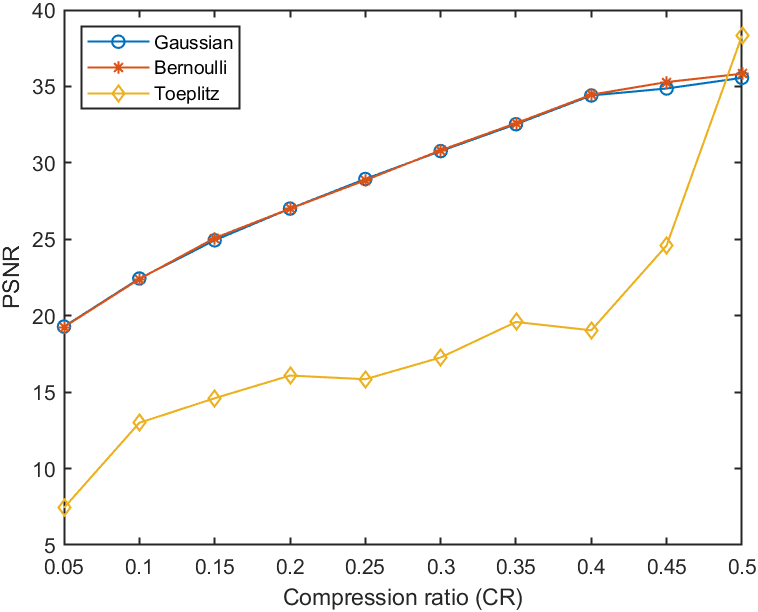}\label{measureimage4}}
	\hfill
	\subfloat[]{\includegraphics[width=0.31\columnwidth]{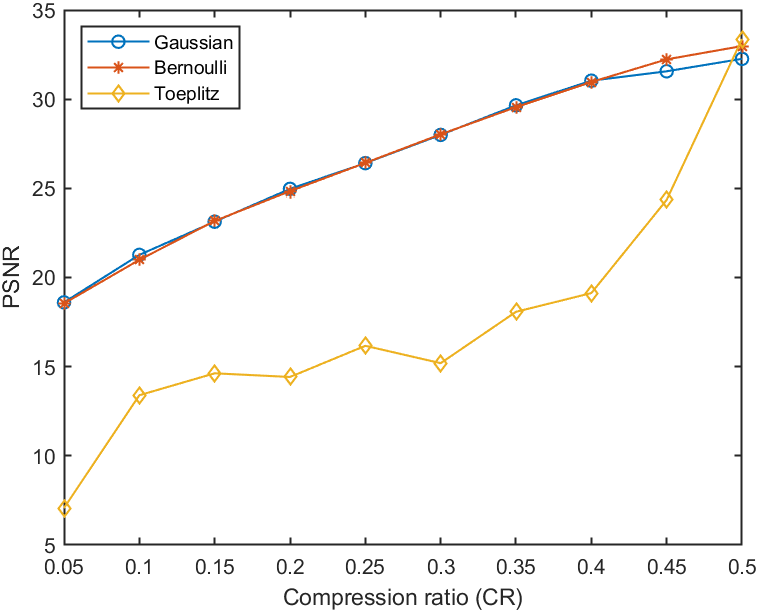}\label{measureimage2}}
	\hfill
	\subfloat[]{\includegraphics[width=0.31\columnwidth]{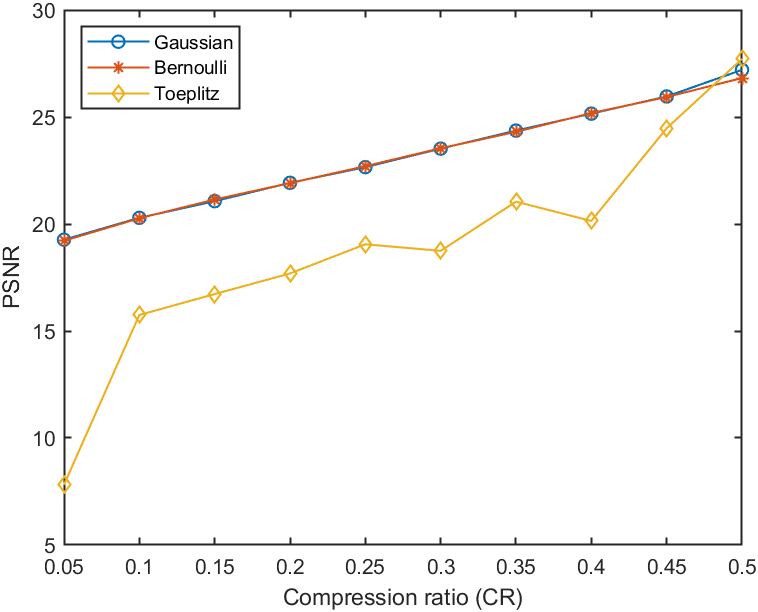}\label{measureimage3}}
	\hfill
	
	\caption{The curves of PSNR values across different measurement matrices for each method. (a) $Barbara$ (b) $Pepper$ (c) $Baboon$ (256$\times$256). The blue curve employs a Gaussian matrix as its measurement matrix. The red curve employs a Bernoulli matrix as its measurement matrix. The yellow curve employs a Toeplitz matrix as its measurement matrix. The recovery method is BP.}
	\label{6}
	
\end{figure}
We select Barbara, Pepper and Baboon as the experimental images, each with a size of 256×256. The measurement matrices are selected as the Gaussian matrix, Bernoulli matrix, and Toeplitz matrix, respectively. Experiments are conducted on each of the three images using these three measurement matrices, with the $x$-axis representing the compression ratio. Comparisons of reconstruction performance are conducted under different compression ratios. we still set $\gamma = 2$.

In images (a), (b) and (c), the $x$-axis represents the compression ratio ranging from 0.05 to 0.5 with intervals of 0.05, while the $y$-axis represents the PSNR values of the images before and after reconstruction. From the curves in the figures, it can be observed that the PSNR values for all three curves gradually increase as the compression ratio rises. The Gaussian and Bernoulli curves exhibit a smooth trend, with the PSNR values increasing uniformly as the compression ratio rises. Moreover, the experimental data for these two curves are closely aligned, and their overall trajectories are nearly identical, demonstrating stability across various sampling scenarios. In contrast, the recovery performance of the Toeplitz curve is generally inferior to the other two curves, particularly at lower sampling rates, where its effectiveness is notably weaker. As $x$ increases, the curve exhibits rapid growth when $x$ is less than 0.1, followed by a fluctuating increase between $x$ = 0.1 and $x$ = 0.4. When $x$ exceeds 0.4, the growth rate accelerates significantly. As $x$ approaches 0.5, the PSNR values improve substantially, nearly matching or even surpassing the recovery performance of the other two curves.

Through the comparative experiments conducted above, we find that DK-STP-CS demonstrates significant advantages in terms of anti-interference capability and image reconstruction quality. From the perspective of memory efficiency, it also outperforms traditional CS methods. However, the experiments reveal that introducing inter-group correlation can affect the success rate of reconstruction to some extent, which further emphasizes the importance of selecting an appropriate measurement matrix. Choosing a structurally stable measurement matrix is essential. From a computational time perspective, DK-STP-CS achieves the preset error threshold more easily through iterations. For tasks related to image storage and recovery, DK-STP-CS offers notable improvements. Using PSNR as the metric to evaluate the quality of reconstructed images, the enhanced DK-STP-CS exhibits clear advantages.

\section{Summary and prospect}

A new CS model named DK-STP-CS was defined in this paper, where the measurement matrix was enhanced through the application of a special semi-tensor product (DK-STP). The technical contributions encompass three key aspects: (1) formal derivation of mathematical properties and operational algorithms for the DK-STP-CS framework, (2) innovative integration of intra-group correlation analysis to optimize measurement matrices through multi-dimensional feature interaction, and (3) systematic validation through image reconstruction experiments demonstrating consistent performance improvements over conventional CS and STP-CS approaches. Empirical results confirmed that the proposed model achieves superior reconstruction fidelity with enhanced noise robustness while maintaining memory efficiency. The framework exhibits accelerated convergence to target error thresholds compared to baseline methods. A comparative analysis of measurement matrices provides operational guidelines for practical implementations. Future research will focus on memory optimization through sparse tensor architectures, hybrid model-learning frameworks, and advanced group correlation analysis using machine learning techniques for adaptive measurement matrix generation.

\end{document}